\newcommand{\Rmnum}[1]{\expandafter\@slowromancap\romannumeral #1@}
\theoremstyle{plain}
\newtheorem{theorem}{Theorem}[section]
\newtheorem{lemma}[theorem]{Lemma}
\theoremstyle{definition}
\newtheorem{definition}[theorem]{Definition}
\theoremstyle{remark}
\newtheorem{notation}{Notation}
\begin{document}
\title{Robust self-triggered DMPC for linear discrete-time systems with local and global constraints}

\author{
\name{Zhengcai Li\textsuperscript{a}\thanks{CONTACT Zhengcai Li. Email: lizhengcai0224@njust.edu.cn}, Changbing Ma\textsuperscript{a} and Huiling Xu\textsuperscript{a}}
\affil{\textsuperscript{a}School of Science, Nanjing University of Science and Technology, Nanjing, China}}

\maketitle

\begin{abstract}
This paper proposes a robust self-triggered distributed model predictive control (DMPC) scheme for a family of Discrete-Time linear systems with local (uncoupled) and global (coupled) constraints. To handle the additive disturbance, tube-based method is proposed for the satisfaction of local state and control constraints. Meanwhile, A special form of constraints tightening is given to guranteen the global coupled constraints. The self-triggering mechanism help reduce the computation burden by skip insignificant iteration steps, which determine a certain sampling instants to solve the DMPC optimization problem in parallel ways. The DMPC optimization problem is construct as a dual form, and sloved distributedly based on the Alternative Direction Multiplier Method (ADMM) with some known simplifications. Recursive feasibility and input-to-state stability of the closed-loop system are shown, the performance of proposed scheme is demonstrated by a simulation example.
\end{abstract}

\begin{keywords}
Robust distributed model predictive control, Self-triggering, Tube-based method, Coupled constraints, Alternative Direction Multiplier Method
\end{keywords}

\section{Introduction}

This paper considers the Robust Distributed Model Predictive Control (RDMPC) of M discrete-time linear dynamical systems  with disturbance, which formulate as
\begin{align}
x^i(t+1)&=A^ix^i(t)+B^iu^i(t)+w^i(t)\\
x^i(t)\in \mathcal{X}^i,& u^i(t)\in \mathcal{U}^i, w^i(t)\in \mathcal{W}^i, i=1,\dots,M
\end{align}
With all of them have to respect a global coupled constraints as following
\begin{align}
\sum_{i=1}^{M}(\Psi^i_xx^i(t)+\Psi^i_uu^i(t))\leq\textbf{1}_p, \text{for all}\quad t
\end{align} 
where $x^i\in\mathbb{R}^{n_i}$, $u^i\in\mathbb{R}^{m_i}$ and $w^i\in\mathbb{R}^{n_i}$ are the state, input and disturbance of of the $i^{th}$ system; $\mathcal{X}^i$ and $\mathcal{U}^i$ are the corresponding state and input constraints for the $i^{th}$ system;  $\mathcal{W}^i\triangleq\{w^i\in R^{n_i} | \parallel w^i \parallel \le\bar{w^i}\}$ is a compact set containing the origin in its interior; the matrices $\Psi^i_x\in\mathbb{R}^{p\times n_i}$ and $\Psi^i_u\in\mathbb{R}^{p\times m_i}$ denote the coupled constraints of all $M$  systems.\\
\indent Model predictive control has been proved to be highly successful in comparison with alternative methods of multivariable control due to its independent adoption by the process industries \citep{MAYNE20142967}. The study of RMPC is an active area of research since the robust stability has been studied using the theory of input-to-state stability (ISS)\citep{JIANG2001857}\citep{Sontag199520}. Some examples of the use of ISS theory include\citep{Limon2009}\citep{5129691} The classical Lyapunov and input-to-state definitions of stability have subtle differences especially when the system is discontinuous; these related issues including implications for MPC are thoroughly explored in\citep{6376100}.\\
\indent The useful concept of tube-based method originated with the seminal papers in 1970s and remains an active area of research\citep{Kurzhanski1993}. It mainly construct nominal MPC system ignoring the additive distrubance to simplify the optimization problem in predictive step. It was initially used in MPC \citep{CHISCI20011019} for linear system, and the tube-based procedure has also been extended to control nonlinear systems \citep{doi:10.1002/rnc.1758}. Aperiodic MPC is one another approach to lose the stress of computation burden, there are two main types of aperiodic MPC, which are event-triggered MPC\citep{ShuaiLiu2017} \citep{7922495} and self-triggered MPC\citep{BERGLIND2012342} \citep{GOMMANS20141279}\citep{6862397}. The detail introduction about the trigering mechanism have been shown in\citep{6425820}. Recently,\citep{DAI20191446} have proposed a novel self-triggered MPC algorithm for linear system subject to both disturbances and state/input constraints, and an self-triggered MPC algorithm with adaptive predictive horizon for perturbed nonlinear system\citep{8667353} have been proposed further. These latest literature shows that tube-based method and self-triggered mechanism can be combined perfectly in discrete-time linear system with bounded disturbance.\\
\indent There exist many complex systems require control, but the traditional centralized control are unsuitable because of their complexity. The complexity gives rise to modeling and data collection issues, raises computational and communication problems, and make centralized control impractical. A very large literature on distributed control and distributed model predictive control (DMPC) has emerged for this reason\citep{WANG20102053}. One popular area in DMPC are state/input globle coupled when the constraints come to $(3)$. \citep{doi:10.1080/00207170701491070}gives a method of sequential process, it arrange every subsystem updated in a certain sequnce until all system complete their iterations. Another approach called cooperative MPC method given by\citep{Trodden2014}\citep{TRODDEN201498}. The main idea of that approach is that all systems within a cooperating set are optimized jointly while systems outside set following their old predictive controls.\\
\indent A reasonable appoach for system $(1)-(3)$ without disturbance is proposed in\citet{bertsekas1998nonlinear}. The method can achieve overall optimality through dual problem involving the Lagrange function, and the dual variable is function as consensus variable in distributed optimal control problem. Until the Alternative Direction Method of Multipliers(ADMM) is reproposed by\citep{10.1145/2020408.2020410}, and its nice numerical performance in distributed system in many appliations motivated some scholars restudy the distributed Optimal control problem(DOCP) using distributed ADMM recently\citep{FARINA20121088}\citep{7799069}. While the computation expendiency can be very high according to huge iteration steps in ADMM, a brand new method has been proposed to handle premature termination ADMM algorithm through constraints tightening\citep{WANG2017184}.\\
\indent Inspired by the above work of distributed Model Predictive Control under distributed ADMM. A novel robust self-triggered distributed model pridictive control is proposed for a family of discrete-time linear perturbed systems with local and global constraints in this paper. The main contributions of this paper organized as follows.\\
\indent 1) It mainly contributes the tube-based method in robust MPC to distributed discrete-time linear system. The distribution of system gives rise in terminal set constructure, constraints satisfaction and stability issues, especially the perturbed distributed system makes global constraints $(3)$ tough to satisfied. A special form of constraints tightening is established so that  all the above difficulties are solved.\\
\indent 2) Sufficient condition about local and global constraints of bounded disturbance is given to gurantee recursive feasibility in Optimal control problem(OCP), which can link the tube-based method in RMPC and constraints tightening in distributed ADMM properly.\\
\indent 3) A parallel DMPC algorithm based on augmented lagrange method(ALM) is given in this paper,  which applies a self-triggering mechanism in RDMPC to lessen computation burdan of optimization control problem as well. It shows the advantage of partial parallel algorithm in iteration times efficiency than traditional central methods.\\
\indent The paper is organised as follows. In Section 2, the method of tube-based to handle bounded dissturbance is given and some useful properties under that approach is introduced. In Section 3, a robust distributed model predictive control scheme is designed and a robust self-triggered DMPC algorithm is proposed. Main results involving recursive feasibility and stability are developed in Section 4. Section 5 gives a numerical example to illustrating the results obtained and conclusion of the paper is given in Section 6. 
\begin{notation}
	The set of $n^i$-dimensional Euclidean space, $m^i$-dimensional Euclidean space, the set of $p \times n^i$ real matrices and $p \times m^i$ real matrices are denoted $\mathbb{R}^{n_i},\mathbb{R}^{m_i},\mathbb{R}^{p\times n_i}$, and $\mathbb{R}^{p\times m_i}$ respectively. $\textbf{1}_p$ denote p-dimensional Column vector whose element is full of 1. The notations $N_{[0,N-1]}, \mathbb{Z}_M$ indicate the sets $\{r\in \mathbb{N}|0\le r\le N-1\}$ and $\{r\in \mathbb{Z}|r\le M\}$ respectively. for $W \in \mathbb{R}^{n\times n}$, the notation $W>0$ indicate $W$ is positive definite, $\parallel W \parallel$ denotes the 2-norm of $W$ and $\mathbb{S}_{++}^{n}$ denotes the space of symmetric $n\times n$ positive definite matrices. for $W \in\mathbb{S}_{++}^{n}$ and $x\in \mathbb{R}^n$,$\parallel x\parallel^2 = x^Tx$ and $ \parallel x \parallel_W^2 = x^TWx$. Given $\mathcal{X},\mathcal{Z}\subseteq\mathbb{R}^n$ and $A\in \mathbb{R}^{n\times n}$, $A\mathcal{X} = \{Ax|x\in\mathcal{X}\},\mathcal{X} \oplus \mathcal{Z} = \{x+z| x\in\mathcal{X},z\in\mathcal{Z}\},\mathcal{X} \ominus \mathcal{Z} = \{x+z| x\in\mathcal{X},z\in\mathcal{Z}\}$. We denote the smallest and largest eigenvalues of a matrix as $\lambda_{max}(\centerdot),\lambda_{min}(\centerdot)$. $\mathcal{V} = \{1,2,\dots,M\}$ and $ \mathcal{E} \subset \mathcal{V} \times \mathcal{V}$ are the vertex set and edge set of undirect graph $G$. The $i$-step-ahead predicted value of a variable at time $t_k$ denote as $(t_k+i|t_k)$.
\end{notation}
For the sake of readability, some proofs are located in Appendix.
\section{Preliminaries and properties of tube-based method}
This section reviews some results in tube-based RMPC and other related concepts. For a expository reason, we using tube-based approach to achieve the control goal. We can take full advantage of the optimal control input sequence calculated at each sampling time instead of just utilising the first element of the sequence. Inter-sampling time is chosen by a self-triggered condition, which actually through a minimun function. 
We first define the nominal system with a given prediction horizon $N\in\mathbb{N}$ by neglecting the disturbance in$(1)$
\begin{equation}
z^i(k+l+1|k)=A^iz^i(k+l|k)+B^iu^i(k+l|k),  l\in N_{[0,N-1]}
\end{equation}
where $z^i(k+l|k)$ is the predicted nominal state at step $k$, and we denotes the error between the actual state and the nominal one as $e^i(k+l|k)\triangleq x^i(k+l|k)-z^i(k+l|k)$, with initial conditions $z^i(k|k)=x^i(k|k)$ and $e^i(k|k)=0$. It evolve as
\begin{align}
e^i(k+l|k)=&A^ie^i(k+l-1|k)+w^i(k+l-1)\\
=&{A^i}^{l-1}w^i(k)+{A^i}^{l-2}w^i(k+1)+ \dots 
+w^i(k+l-1)\notag
\end{align}
\indent If we can assure the boundness of above error $e_i(k+l|k)$ which we'll prove in section 4, and ignoring the based coupled constraints $(3)$, then we can construct the OCP at state $x^i(t_k)$ as
\begin{subequations}
	\begin{align}
	\mathbb{P}(x^i(t_k)) \triangleq &\min_{u^i(t_k)}J^i(x^i(t_k),u^i(t_k))\notag\\
	=&\sum_{l=0}^{N-1}[\parallel z^i(t_k+l|t_k)\parallel_{Q^i}^2+\parallel u^i(t_k+l|t_k)\parallel_{R^i}^2]+\parallel z^i(t_k+N|t_k)\parallel_{P^i}^2
	\end{align}
	subject to
	\begin{align}
	&z^i(t_k|t_k)=x_i(t_k)\\
	&z^i(t_k+N|t_k)\in 	T^i_f\\
	&\forall l\in\mathbb{N}_{[0,N-1]}:\notag\\
	&z^i(t_k+l+1|t_k)=A^iz^i(t_k+l|t_k)+B^iu^i(k+l|k)\\
	&z^i(t_k+l|t_k)\in \mathcal{Z}_l^i\\
	&u^i(t_k+l|t_k)\in \mathcal{U}^i
	\end{align}
\end{subequations}
\indent The nominal predictive state constraints set is defined as $\mathcal{Z}_l^i\triangleq\mathcal{X}^i\ominus\mathcal{R}_l^i,l\in \mathbb{N}_{[1,N-1]}$, with $\mathcal{R}_l^i\triangleq\oplus_{j=0}^{l-1}{A^i}^j\mathcal{W}^i$. $Q^i>0,R^i>0$ and $P^i>0$ are the weighting matrices of cost function. 
Regarding to the terminal set $\mathcal{X}^i_\varepsilon$, we first construct a set $\mathcal{X}^i_r \triangleq \{z^i \in R^{n_i}| \parallel z^i \parallel_{P^i} \le r^i\}$ for this nominal system$(5)$, such that the set $\mathcal{X}^i_r$ is maximal Robust Control Invariant Set(RCIS) under local state feedback control law : $u^i(z^i) = K^iz^i$, which also need to satisfied with following basic constraints:
\begin{align}
&\forall z^i \in \mathcal{X}^i_r, K^iz^i \in \mathcal{U}^i\\
&\mathcal{X}^i_r \subset \mathcal{X}^i
\end{align}  
\indent Where $K^i$ is chosen as  the unconstrained LQ optimal feedback control law for the nominal system$(5)$, while $P^i$ is the solution of the Lyapunov equation
\begin{align}
(A^i+B^iK^i)^T P^i (A^i+B^iK^i) + Q^i + {K^i}^TR^iK^i = P^i
\end{align}
And then we give the terminal set a define $\mathcal{X}^i_\varepsilon \triangleq \{z^i \in R^{n_i}| \parallel z^i \parallel_{P^i} \le \varepsilon^i\}$ 
with $\sqrt{1-\frac{\lambda_{min}(Q^i)}{\lambda_{max}(P^i)}} r^i \le \varepsilon^i \le r^i$. 
\quad We'll use this condition to guarantee recursive feasibility and stability under a self-triggered mechanism.
\begin{lemma}\citep{DAI20191446}
	The nominal predictive state difference between $t_k$ and $t_{k+1} = t_k + M_k^i$ is norm-bounded by
	\begin{align}
	||z^i(t_{k+1}+l|t_{k+1})-z^i(t_{k+1}+l|t_k)||_\phi &= \parallel{A^i}^le^i(t_k+M_k^i|t_k)\parallel_\phi\notag \\
	&\le
	\begin{cases}
	\mathcal{B}(\centerdot)||A^i||^l(1-||A^i||^{M_k^i}), & \text{if} ||A^i||\neq 1\\
	\mathcal{B}(\centerdot)M_k, & \text{if} ||A^i||= 1
	\end{cases}
	\end{align}
	where $\phi\in\mathbb{S}_{++}^{n}$ and $\mathcal{B}(\centerdot):\mathbb{S}_{++}^{n}\to\mathbb{R}$ is defined as
	\begin{equation}
	\mathcal{B}(\centerdot)\triangleq
	\begin{cases}
	\sqrt{\lambda_{max}(\centerdot)}\frac{\bar{w^i}}{1-||A^i||},& \text{if} ||A^i||\neq 1\\
	\sqrt{\lambda_{max}(\centerdot)}\bar{w^i}, & \text{if} ||A^i||= 1
	\end{cases}
	\end{equation}
\end{lemma}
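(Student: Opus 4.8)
The plan is to establish the statement in two stages: first justify the displayed identity
\begin{equation*}
z^i(t_{k+1}+l|t_{k+1})-z^i(t_{k+1}+l|t_k)={A^i}^{l}e^i(t_k+M_k^i|t_k)
\end{equation*}
by a purely dynamical argument, and then bound its right-hand side using the explicit form of the error in $(5)$ together with a geometric-series estimate.

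For the identity I would argue as follows. At the re-optimisation instant $t_{k+1}=t_k+M_k^i$ the nominal state is re-initialised to the measured state, $z^i(t_{k+1}|t_{k+1})=x^i(t_{k+1})$, whereas the trajectory predicted at $t_k$ gives $z^i(t_{k+1}|t_k)=z^i(t_k+M_k^i|t_k)$. Their difference at $l=0$ is therefore $x^i(t_{k+1})-z^i(t_k+M_k^i|t_k)=e^i(t_k+M_k^i|t_k)$, the re-initialisation offset. Since the warm-start input applied at $t_{k+1}$ coincides over the overlap with the tail of the open-loop sequence optimised at $t_k$, both nominal trajectories satisfy the recursion $(4)$ with the same inputs $u^i(\cdot)$. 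Writing $\delta^i(l):=z^i(t_{k+1}+l|t_{k+1})-z^i(t_{k+1}+l|t_k)$ and subtracting the two copies of $(4)$, the input terms cancel and $\delta^i$ obeys the homogeneous recursion $\delta^i(l+1)=A^i\delta^i(l)$ with $\delta^i(0)=e^i(t_k+M_k^i|t_k)$, so $\delta^i(l)={A^i}^{l}e^i(t_k+M_k^i|t_k)$, which is the claimed equality.

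For the bound I would first pass from the weighted to the Euclidean norm via $\|x\|_\phi\le\sqrt{\lambda_{max}(\phi)}\,\|x\|$ for $\phi\in\mathbb{S}_{++}^{n}$. Substituting the explicit error from $(5)$ and using submultiplicativity together with the triangle inequality gives
\begin{equation*}
\|{A^i}^{l}e^i(t_k+M_k^i|t_k)\|\le\|A^i\|^{l}\sum_{s=0}^{M_k^i-1}\|A^i\|^{s}\,\|w^i(t_k+M_k^i-1-s)\|.
\end{equation*}
Bounding each disturbance by $\|w^i\|\le\bar{w^i}$ reduces the estimate to the geometric sum $\sum_{s=0}^{M_k^i-1}\|A^i\|^{s}$. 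When $\|A^i\|\neq1$ this equals $(1-\|A^i\|^{M_k^i})/(1-\|A^i\|)$, and collecting the constant $\sqrt{\lambda_{max}(\phi)}\,\bar{w^i}/(1-\|A^i\|)=\mathcal{B}(\phi)$ yields the first branch; when $\|A^i\|=1$ the sum equals $M_k^i$ and the constant collapses to $\sqrt{\lambda_{max}(\phi)}\,\bar{w^i}=\mathcal{B}(\phi)$, giving the second branch.

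The only genuinely delicate step is the identity in the second paragraph: one must verify that the warm-started input at $t_{k+1}$ agrees with the tail of the sequence optimised at $t_k$ over the overlap, so that the difference of the two nominal predictions is driven solely by the offset $e^i(t_k+M_k^i|t_k)$ and evolves autonomously under $A^i$ rather than under the closed-loop matrix $A^i+B^iK^i$; the remaining norm estimate is routine bookkeeping. I would also flag what appears to be a typo in the second branch, where $M_k^i$ should replace $M_k$.
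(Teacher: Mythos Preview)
Your argument is correct. The paper, however, does not supply its own proof of this lemma: it is quoted verbatim from \cite{DAI20191446} and simply invoked where needed. In particular, Appendix~A opens with ``From the proof of lemma~1 \citep{DAI20191446}\ldots'' to extract the error bound $\|e^i(t_k+l|t_k)\|\le\bar{w^i}(1-\|A^i\|^l)/(1-\|A^i\|)$ (respectively $l\bar{w^i}$), and Appendix~B uses the identity $\hat z^i(t_{k+1}+l|t_{k+1})=z^i(t_{k+1}+l|t_k)+{A^i}^l e^i(t_k+M_k^i|t_k)$ without derivation. Both of these are precisely the ingredients you establish, and your derivation of the identity via the homogeneous recursion for the difference $\delta^i(l)$ under identical open-loop inputs is the standard one. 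Your remark about the $M_k$/$M_k^i$ typo is also apt.
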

\begin{lemma}\citep{DAI20191446}
	For any sampling instant $t_k$ $k\in N$, let $u^*(t_k)$ be the solution of $\mathbb{P}(x^i(t_k))$. If the first $M_k^i$ step are applied to $(1)$ in a strictly open-loop fashion, then the difference between $J^*(x^i(t_k),u^i(t_k))$ and $J^*(x^i(t_{k+1}),u^i(t_{k+1}))$ is bounded by
	\begin{align}
	&J^*(x^i(t_{k+1}),u^i(t_{k+1})) - J^*(x^i(t_k),u^i(t_k))\le g^i(M^i_k,x^i(t_k),u^{i,*}(t_k),\bar{w^i}) \\
	\text{where} &\notag\\
	&g^i(M^i_k,x^i(t_k),u^*(t_k),\bar{w^i}) \triangleq g_0^i(M^i_k,x^i(t_k),u^*(t_k),\bar{w^i}) -\notag\\
	& \sum_{l=0}^{M_k^i-1}[||z^*(t_k+l|t_k)||_{Q^i}^2 + ||u^*(t_k+l|t_k)||_{R^i}^2].
	\end{align}
\end{lemma}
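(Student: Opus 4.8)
The plan is to adapt the standard shifted-sequence argument for the decrease of the optimal value function to the self-triggered horizon shift of $M_k^i$ steps and to the re-initialisation of the nominal state at the new sampling instant $t_{k+1}=t_k+M_k^i$. First I would exhibit an admissible (not necessarily optimal) candidate input for $\mathbb{P}(x^i(t_{k+1}))$: for $l\in\mathbb{N}_{[0,N-M_k^i-1]}$ reuse the tail of the optimiser, $\tilde u^i(t_{k+1}+l|t_{k+1})=u^{i,*}(t_k+M_k^i+l|t_k)$, and for the last $M_k^i$ indices append the terminal feedback $\tilde u^i(t_{k+1}+l|t_{k+1})=K^i\tilde z^i(t_{k+1}+l|t_{k+1})$. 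Crucially, the candidate nominal trajectory is reset to the actual state $z^i(t_{k+1}|t_{k+1})=x^i(t_{k+1})$, which equals the previously predicted $z^{i,*}(t_k+M_k^i|t_k)$ plus the propagated error $e^i(t_k+M_k^i|t_k)$; admissibility of the candidate (terminal inclusion after the appended steps) rests on $\mathcal{X}_r^i$ being RCIS under $K^i$ and on the margin $\varepsilon^i\le r^i$ absorbing this error.

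By optimality, $J^*(x^i(t_{k+1}),u^i(t_{k+1}))\le J^i(x^i(t_{k+1}),\tilde u^i(t_{k+1}))$, so it suffices to bound the candidate cost. I would then split $J^i(x^i(t_{k+1}),\tilde u^i(t_{k+1}))-J^*(x^i(t_k),u^{i,*}(t_k))$ stage by stage. The $M_k^i$ stages removed from the front of the horizon produce exactly the explicit subtracted term $-\sum_{l=0}^{M_k^i-1}[\|z^{i,*}(t_k+l|t_k)\|_{Q^i}^2+\|u^{i,*}(t_k+l|t_k)\|_{R^i}^2]$, which is the structural source of decrease. For the surviving stages I would write the candidate nominal state as $\tilde z^i=z^{i,*}(\cdot|t_k)+\delta^i$ with $\delta^i(t_{k+1}+l|t_{k+1})={A^i}^le^i(t_k+M_k^i|t_k)$ (the open-loop shift cancels the identical inputs), expand $\|z^{i,*}+\delta^i\|_{Q^i}^2=\|z^{i,*}\|_{Q^i}^2+2{z^{i,*}}^TQ^i\delta^i+\|\delta^i\|_{Q^i}^2$, and bound the cross and remainder terms using Lemma 2.1 with $\phi=Q^i$ together with Cauchy--Schwarz; the same is done for the input stages (unperturbed for the reused tail) and for the terminal cost with $\phi=P^i$.

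The remaining bookkeeping is the terminal contribution: the shifted optimiser's terminal cost $\|z^{i,*}(t_k+N|t_k)\|_{P^i}^2$ must be reconciled with the $M_k^i$ appended terminal-feedback stages and the candidate's new terminal cost $\|\tilde z^i(t_{k+1}+N|t_{k+1})\|_{P^i}^2$. Here I would invoke the terminal Lyapunov equation $(A^i+B^iK^i)^TP^i(A^i+B^iK^i)+Q^i+{K^i}^TR^iK^i=P^i$: along the closed-loop terminal dynamics $z\mapsto(A^i+B^iK^i)z$ each appended stage cost telescopes against the decrease of $\|\cdot\|_{P^i}^2$, so the net nominal contribution of the extension is nonpositive and only a perturbation correction remains. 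Collecting all the perturbation and terminal remainders defines $g_0^i$, and the claim follows as $g^i=g_0^i-\sum_{l=0}^{M_k^i-1}[\cdots]$.

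The main obstacle is precisely this perturbation accounting forced by re-initialising the nominal state at $x^i(t_{k+1})$: unlike the disturbance-free case, every surviving and appended stage is displaced by $\delta^i$, so the decrease is contaminated by cross-terms $2{z^{i,*}}^TQ^i\delta^i$ (and their $R^i$, $P^i$ analogues) that do not cancel. Bounding these uniformly, summing the geometric-in-$\|A^i\|$ estimates of Lemma 2.1 over $l$, and packaging everything into a single closed-form $g_0^i$ that still separates the two cases $\|A^i\|\neq1$ and $\|A^i\|=1$ is the delicate step; the terminal telescoping is then routine.
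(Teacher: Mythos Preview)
The paper does not actually prove this lemma: it is quoted verbatim from \citep{DAI20191446} and no argument is given (the appendices only contain the proofs of Lemma~3.1 and Lemma~4.1). Consequently there is no ``paper's own proof'' to compare against.

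That said, your sketch is the standard route used in \citep{DAI20191446} and is correct in outline. The shifted candidate $\tilde u^i$ (tail of the optimiser on $[0,N-M_k^i-1]$, terminal feedback $K^i\tilde z^i$ on $[N-M_k^i,N-1]$), the re-initialisation identity $\tilde z^i(t_{k+1}+l|t_{k+1})=z^{i,*}(t_{k}+M_k^i+l|t_k)+{A^i}^l e^i(t_k+M_k^i|t_k)$, the expansion of each quadratic stage cost and bounding of the cross terms via Lemma~2.1, and the telescoping of the appended terminal stages through the Lyapunov identity for $P^i$ are exactly the ingredients that define $g_0^i$ in the cited reference. The only point to be careful with is that $g_0^i$ is a specific closed-form expression in \citep{DAI20191446}; your argument must reproduce \emph{that} expression (with the two cases $\|A^i\|\neq 1$ and $\|A^i\|=1$), not merely some upper bound, since the self-triggering rule~(14) is stated in terms of the concrete $g^i$.
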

In discrete system ,we get a self-triggering machanism which is proposed as following 
\begin{subequations}
	\begin{align}
	M^i_k \triangleq \arg\min_{M_k\in N_{[1,N]}}g^i(M_k,x^i(t_k),u^*(t_k),\bar{w^i})\\
	\text{subjec to}\quad g^i(M_k,x^i(t_k),u^*(t_k),\bar{w^i}) < 0
	\end{align}
\end{subequations}
\indent If the optimizaition problem is infeasible, then we set $M^i_k = 1$.

\section{RDMPC scheme for the coupled constraints system}
Our control goal is to design a robust self-triggered DMPC algorithm, which can stablize the whole system at a fast convergence rate and satisfied with constraints (2) and (3), while simultaneously reducing the amount of compute burden.
\subsection{Constriants tightening }
Now considering the coupled constraints$(3)$, as we construct a nominal system$(5)$ which has a tube error $e$ with actual system$(1)$, we should tightening the constraints to satisfied with the feasibility. Specifically, the tightening constraints of $(3)$ are
\begin{subequations}
	\begin{align}
	\sum_{i=1}^{M}(\Psi^i_xz^i(t_k+l|t_k)+\Psi^i_uu^i(t_k+l|t_k))\leq(1- \epsilon(l))\textbf{1}_p,
	\quad &\forall l \in \mathbb{N}_{[0,N-1]}\\
	\sum_{i=1}^{M}(\Psi^i_Nz^i(t_k+N|t_k))\leq(1- \epsilon(N))\textbf{1}_p,\quad & \forall z^i(t_k+N|t_k) \in T^i_f
	\end{align}
\end{subequations}
where terminal coupled constriants coefficient matrix formulate as $\Psi^i_N = \Psi^i_x +\Psi^i_uK^i$, and  supposed there are $I$ subsystems $||A^i||\neq 1,$ $J$ subsystems $||A^j||= 1$ in the discrete system. We give a simlified denote that
\begin{align}
&\epsilon(l) =  \notag 
\sum_{i=1}^{I}\parallel\Psi_x^i\parallel\bar{w^i}\frac{1-\parallel A^i \parallel^l}{1-\parallel A^i \parallel} + \sum_{j=1}^{J}\parallel\Psi_x^j\parallel l\bar{w^j}
\end{align}
\begin{align}
&\epsilon(N) =  \notag 
\sum_{i=1}^{I}\parallel\Psi_N^i\parallel\bar{w^i}\frac{1-\parallel A^i \parallel^N}{1-\parallel A^i \parallel} + \sum_{j=1}^{J}\parallel\Psi_N^j\parallel N\bar{w^j}\notag
\end{align}
\indent Obviously, the tolerance $0 < \epsilon(1) < \dots < \epsilon(N) <1$ should be satisfied to ensure Gradual decrease in terminal constraints and $0 \in T^i_f$.
Correspondingly, the tightened RDMPC formulation is:
\begin{align}
\mathbb{P}_{\epsilon}(x^i(t_k)) \triangleq \min_{\textbf{u}^i}\{\sum_{i=1}^{M}J^i(x^i(t_k),\textbf{u}^i(t_k)):(6b-6f) \text{and} (15a)\}
\end{align}
$\textbf{u}^i := \{u^i(0),u^i(1),\dots,u^i(N-1)\}$,
the choice of $T^i_f$ is chosen to be$	T^i_f \triangleq \mathcal{X}^i_\varepsilon$, which subject to
\begin{align}
\sqrt{1-\frac{\lambda_{min}(Q^i)}{\lambda_{max}(P^i)}} r^i \le \varepsilon^i \le r^i.
\end{align}
\indent We can represented the tightend RDMPC formulation as
\begin{subequations}
	\begin{align}
	\mathbb{P}_\epsilon(x(t_k)):  &\min_{u^i(k)}\sum_{i=1}^{M}J^i(x^i(t_k),\textbf{u}^i(t_k))\\
	s.t. 
	& (6b)-(6f), i \in [1,M] \notag\\
	& \sum_{i=1}^{M}f^i(x^i(t_k),\textbf{u}^i(t_k)) \le b(\epsilon)
	\end{align}
\end{subequations}
where $f^i(x^i(t_k),\textbf{u}^i(t_k))$ is an appropriate function form rewriting from $(15a)$, $b(\epsilon) =  [\textbf{1}_p,(1- \epsilon(1))\textbf{1}_p,\dots,(1- \epsilon(N-1))\textbf{1}_p]^T$.\\
\indent The connection of nominal system and real system is shown in the following lemma 3.1, while its proof locate at appedix A.
\begin{lemma}
	The coupling constraints (3) in real system can be satisfied when (15) is satisfied in nominal system if the upper bound of disturbance satisfies with
	\begin{align}
	\begin{cases}
	\bar{w^i}\le (\frac{1}{M\parallel\Psi_N^i\parallel}-\frac{r^i}{\sqrt{\lambda_{max}(P^i)}})\frac{1-\parallel A^i\parallel}{1-\parallel A^i\parallel^N}, & \text{if} \parallel A^i\parallel\neq 1\notag\\
	\bar{w^j}\le (\frac{1}{M\parallel\Psi_N^j\parallel}-\frac{r^j}{\sqrt{\lambda_{max}(P^j)}})\frac{1}{N}  &\text{if} \parallel A^j\parallel= 1
	\end{cases}
	\end{align}
\end{lemma}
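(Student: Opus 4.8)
The plan is to exploit the tube decomposition $x^i(t_k+l|t_k)=z^i(t_k+l|t_k)+e^i(t_k+l|t_k)$ and to reduce the real coupled constraint (3) to its nominal tightened counterpart (15) plus an error contribution that is absorbed \emph{exactly} by the tolerances $\epsilon(l)$ and $\epsilon(N)$. Concretely, I would substitute $x^i=z^i+e^i$ into the left-hand side of (3), split the sum into a nominal part and an error part, and treat the in-horizon stages $l\in\mathbb{N}_{[0,N-1]}$ (open-loop predicted input $u^i=u^i(t_k+l|t_k)$, so only the state carries an error) separately from the terminal stage $l=N$ (terminal feedback $u^i=K^iz^i$, for which $\Psi_x^ix^i+\Psi_u^iu^i$ collapses to $\Psi_N^ix^i$ with $\Psi_N^i=\Psi_x^i+\Psi_u^iK^i$).

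For the in-horizon stages the nominal part is $\le(1-\epsilon(l))\textbf{1}_p$ by (15a). The error part $\sum_{i=1}^M\Psi_x^ie^i(t_k+l|t_k)$ I would bound component-wise via $\|\Psi_x^ie^i\|\le\|\Psi_x^i\|\,\|e^i\|$ together with the error recursion (5), which gives $\|e^i(t_k+l|t_k)\|\le\bar{w^i}\sum_{j=0}^{l-1}\|A^i\|^j$, equal to $\bar{w^i}(1-\|A^i\|^l)/(1-\|A^i\|)$ when $\|A^i\|\neq1$ and to $\bar{w^i}l$ when $\|A^i\|=1$. Summing the $I$ and $J$ subsystems separately recovers precisely the definition of $\epsilon(l)$, so the error part is $\le\epsilon(l)\textbf{1}_p$ and the two pieces add to $\le\textbf{1}_p$, establishing (3) on the horizon; this step needs nothing beyond $\epsilon(l)<1$.

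The disturbance bound is where the real work lies, namely at the terminal stage. There the nominal constraint (15b) must hold for every $z^i(t_k+N|t_k)\in T^i_f=\mathcal{X}^i_\varepsilon\subseteq\mathcal{X}^i_r$, so I would first bound the worst-case nominal contribution over the RCIS: from $\|z^i\|_{P^i}\le r^i$ and the extreme-eigenvalue relation between the Euclidean and $P^i$-weighted norms, each component of $\Psi_N^iz^i$ is at most $\|\Psi_N^i\|\,r^i/\sqrt{\lambda_{max}(P^i)}$. Adding the error contribution, bounded component-wise by $\|\Psi_N^i\|\bar{w^i}(1-\|A^i\|^N)/(1-\|A^i\|)$ (resp.\ $\|\Psi_N^i\|\bar{w^i}N$), each component of $\sum_{i=1}^M\Psi_N^ix^i(t_k+N)$ is at most $\sum_{i=1}^M\big(\|\Psi_N^i\|r^i/\sqrt{\lambda_{max}(P^i)}+\|\Psi_N^i\|\bar{w^i}(1-\|A^i\|^N)/(1-\|A^i\|)\big)$. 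The stated bound on $\bar{w^i}$ is exactly the rearrangement that forces each summand to be $\le1/M$, so the total is $\le1$ and the real terminal coupled constraint holds; equivalently it guarantees $\epsilon(N)\le1-\sum_{i=1}^M\|\Psi_N^i\|r^i/\sqrt{\lambda_{max}(P^i)}<1$, keeping the terminal tightening feasible with $0\in T^i_f$.

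I expect the terminal estimate to be the main obstacle: producing a clean, constraint-by-constraint (component-wise) bound on $\sum_i\Psi_N^iz^i$ over the coupled product of terminal sets $\prod_i\mathcal{X}^i_\varepsilon$, and verifying that the per-subsystem splitting into the $1/M$ budget is tight enough to close against the original right-hand side $\textbf{1}_p$. Everything else, the error propagation from (5) and its matching with the closed forms of $\epsilon(l)$ and $\epsilon(N)$, is a routine computation once the tube decomposition and the norm bound are in place.
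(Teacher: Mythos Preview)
Your proposal is correct and follows essentially the same route as the paper's proof in Appendix~A: split $x^i=z^i+e^i$, handle the in-horizon stages by matching the error sum to the closed form of $\epsilon(l)$ so that (15a) plus the error yields $\le\textbf{1}_p$, and at the terminal stage bound $\|\Psi_N^iz^i\|$ over $\mathcal{X}^i_\varepsilon\subseteq\mathcal{X}^i_r$ via $\|z^i\|\le r^i/\sqrt{\lambda_{\max}(P^i)}$, then rearrange the disturbance hypothesis into the per-subsystem $1/M$ budget (equivalently $\sum_i\|\Psi_N^i\|r^i/\sqrt{\lambda_{\max}(P^i)}+\epsilon(N)\le1$). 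One cosmetic point: at $l=N$ the paper treats the real constraint as $\sum_i\Psi_N^ix^i$, i.e.\ with $u^i=K^ix^i$ rather than $K^iz^i$, but since you then bound both the nominal and error parts with $\Psi_N^i$ your computation is already aligned with this.
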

\indent In the end of this subsection, we give some general assumptions to ensure initial feasibility of OCP and global constraints from predictive system to real system.\\
$(A1)$: $(A^i,B^i)$ is reachable and $x^i(t_k)$ is measurable for all $i \in \mathbb{Z}_M$.\\
$(A2)$: $\mathcal{W}^i$,  $\mathcal{X}^i$ and $\mathcal{U}^i$ are polytope containing the origins in its interior for all $i \in \mathbb{Z}_M$.\\
$(A3)$: The undirected graph $G = (\mathcal{V},\mathcal{E})$ is full connected.\\
\subsection{Distributed ADMM form of OCP }
Let $\lambda \in\mathbb{R}^{p\times N}$ be the dual variable associated with constraints $(17b)$, the lagrange function of OCP can formulate as:
\begin{align}
\mathcal{L}({\textbf{u}^i},\lambda) =: \sum_{i=1}^{M}J^i(x^i,\textbf{u}^i) + \lambda^T(\sum_{i=1}^{M}f^i(x^i,{\textbf{u}^i})-b(\epsilon))
\end{align}
The dual problem is
\begin{align}
\max_{\lambda\ge 0} \min_{\textbf{u}^i \in \mathcal{U}} \mathcal{L}({\textbf{u}^i},\lambda) := \min_{\lambda\ge 0} \max_{\textbf{u}^i \in \mathcal{U}} -\mathcal{L}({\textbf{u}^i},\lambda) =\min_{\lambda\ge 0}\sum_{i=1}^{M}g^i(\lambda)
\end{align}
where
\begin{align}
g^i(\lambda) := \max_{\textbf{u}^i \in \mathcal{U}}-J^i(x^i,\textbf{u}^i)-\lambda(f^i(x^i,{\textbf{u}^i})-\frac{1}{M}b(\epsilon))
\end{align}
The dual problem is not distributed as $\lambda$ is a common variable in $g^i(\lambda)$. According to assumption $(A3)$, the dual problem $(19)$ can rewritten as a consensus problem in the following
\begin{align}
\min_{\lambda^{i}\ge 0}\sum_{i=1}^{M}g^i(\lambda^i) \quad s.t. \lambda^i =\lambda^j, (i,j)\in \mathcal{E}
\end{align}
where $\lambda^i$ is the local copy of $\lambda$ in the $i^{th}$ system , and the conditions $\lambda^i =\lambda^j$ ensure the consensus of whole system dual variable. It can be further rewritten by using a new set of reference variable in the form of 
\begin{align}
\max_{\lambda^i\ge 0}\sum_{i=1}^{M}g^i(\lambda^i) \quad s.t. \sum_{i=1}^{M}E^i\lambda^i = c.
\end{align}
The augmented lagrangian methods form of $(23)$ is
\begin{align}
\mathcal{L}_{\rho}(\lambda^i,\omega) 
=&\sum_{i=1}^{M}g^i(\lambda^i) + \omega^T(\sum_{i=1}^{M}E^i\lambda^i - c) + \frac{\rho}{2}\parallel \sum_{i=1}^{M}E^i\lambda^i - c \parallel^2_2.
\end{align}
\indent To apply the ALM form to solve problem$(22)$, the distributed ADMM is giving in the following.
\begin{align}
&(\lambda^1_{k+1},\dots,\lambda^M_{k+1})=\arg\min_{\lambda^i\ge 0}\mathcal{L}_{\rho}(\lambda^1,\dots,\lambda^M,\omega_k),\\
&\textbf{u}^i_{k+1}=\arg\min_{{\textbf{u}^i \in \mathcal{U}^i}}\mathcal{L}({\textbf{u}^i},\lambda^i_{k+1})\\
&\omega_{k+1}=\omega_k-\rho(\sum_{i=1}^{M}E^i\lambda^i_{k+1} - c)
\end{align}
\indent For every subproblem with regard to $\lambda^i$, we can introduce neighbor term $\frac{1}{2}\parallel \lambda^i - \lambda^i_k \parallel_s^2$ and relaxation factor $\gamma$ to reduce iteration burden. Then the updating step of $\lambda^i_{k+1},\omega_{k+1}$ comes as
\begin{align}
&\lambda^i_{k+1}=\arg\min_{\lambda^i\ge 0}\mathcal{L}_{\rho}(\lambda^1_k,\dots,\lambda^i,\dots,\lambda^M_k,\omega_k)+\frac{1}{2}\parallel \lambda^i - \lambda^i_k \parallel_s^2,\\
&\omega_{k+1}=\omega_k-\rho\gamma(\sum_{i=1}^{M}E^i\lambda^i_{k+1} - c)
\end{align}
\indent We noted that $\textbf{u}^i_{t_k} = \{u^i_{k}(0),u^i_{k}(1),\dots,u^i_{k}(N-1)\}$ is the optimal solution of $\mathbb{P}_\epsilon(x(t_k))$,and define the applied optimal control sequence as $\textbf{u}^{i,*}(t_k) \triangleq  \{ u^i_{k}(0),u^i_{k}(1),\dots,u^i_{k}(M_k) \} $ when we get $M_k$ from algorithm 2. Then we give the closed-loop sysytem under the Robust self-triggered DMPC before next sampling time in the following:
\begin{align}
x^i(t_k+l+1) = A^ix^i(t_k+l) + B^iu^{i,*}(t_k+l) + w^i(t_k + l)
\end{align}
\subsection{Self-triggered DMPC algorithm}
The overall procedure of the distributed ADMM algorithm at time $t_k$ is summarized in the following algorithm 1.
\begin{algorithm}
	\caption{Consensus ADMM algorithm} 
	\hspace*{0.02in} {\bf Input:} 
	Input measured system state  $x^i$,$i\in\mathbb{Z}_M$\\
	\hspace*{0.02in} {\bf Output:} 
	$\textbf{u}^{i,*}_{k},i\in\mathbb{Z}_M$\\
	Initiallization: choose $\rho > 0$,set $k=0,\lambda_0^i=0,\omega_0 = 0$ for all $i\in\mathbb{Z}_M,j\in N_i$.\\
	\textbf{repeat}\\
	\textbf{for}all $i\in\mathbb{Z}_M$(parallel)\textbf{Do}\\
	obtain $\lambda^i_{k+1}$ from (28);\\
	obtain $\textbf{u}^i_{k+1}$ from (26);\\
	obtain $\omega_{k+1}$ from (29);\\
	\textbf{end for} $k \leftarrow k+1$
	\textbf{until} stop criterion is satisfied
\end{algorithm}\\
\indent From \citep{DAI20191446} we know that self-triggered machanism can reduce the burden on compuation largely, we use it in the overall Robust self-triggerd DMPC scheme at time $t_k$. The framework is shown in the algorithm 2.
\begin{algorithm}
	\caption{Robust self-triggered DMPC }
	1:Every subsystem $i$ measure its own state $x^i(t_k)$;\\
	\textbf{IF} $x^i(t_k) \in \mathcal{X}^i_{\varepsilon^i}$, Subsystem $i$ obtains local feedback control law $u^i(t_k) = K^ix^i(t_k)$ and applied itself until maximum iteration step.\\
	\textbf{IF ELSE}\\
	2:Other subsystem $i$ calls algorithm 1 with $x^i(t_k)$ and get $\textbf{u}^{i,*}$;\\
	3:Subsystem $i$ calculate optimization problem (14a) to get $M^i_k $;\\
	4:Choose $M_k = \min\{M_k^i\}$ as next sampling step;\\
	5:Subsystem $i$ obtains $\textbf{u}^{i,*}(x(t_k))$ and applied it to Every subsystem $i$ ;\\
	6:Let $t_k = t_k+M_k$ and go back to step 1.
\end{algorithm}
\section{Main properties of the RDMPC scheme}
This section presents the main properties of RDMPC schme in algorithm 2, that is recursive feasibility, satisfaction of  both the local state and input constraints and the global coupled constraints, and closed-loop stability.
The following lemma ensure that if OCP $(18)$ is feasible at some sampling time, then its feasiblity remains at the next sampling time.
\begin{lemma}[Recursive feasibility of nominal predictive system]
	Assume that $\bm{u}(t_k)=[\bm{u}^1(t_k), \bm{u}^2(t_k), \dots, \bm{u}^M(t_k)]$ where $\bm{u}^i(t_k) = [u^i(t_k|t_k), u^i(t_k+1|t_k), \dots, u^i(t_k+N-1|t_k)]^T$ is a feasible solution at time $t_k$ for OCP $\mathbb{P}_\epsilon(x(t_k))$, 
	Then the OCP $\mathbb{P}_\epsilon(x(t_{k+1}))$ is feasible at time $t_{k+1}$.\\
	If the bound on disturbance satisfies with \\
	(1)the local condition 
	\begin{align}
	\begin{cases}
	\bar{w^i}\le \frac{(r^i - \varepsilon^i)(1-||A^i||)}{\sqrt{\lambda_{max}(P^i)}(1-||A^i||)^N}, & \text{if} ||A^i|| \neq 1\notag\\
	\bar{w^j}\le \frac{(r^j - \varepsilon^j)}{\sqrt{\lambda_{max}(P^j)}N},  &\text{if} ||A^j||= 1
	\end{cases}
	\end{align}
	(2)the global condition
	\begin{align}
	\begin{cases}
	\bar{w^i}\le (\frac{1}{M\parallel\Psi_N^i\parallel}-\frac{r^i}{\sqrt{\lambda_{max}(P^i)}})\frac{1-\parallel A^i\parallel}{1-\parallel A^i\parallel^N}, & \text{if} \parallel A^i\parallel\neq 1\notag\\
	\bar{w^j}\le (\frac{1}{M\parallel\Psi_N^j\parallel}-\frac{r^j}{\sqrt{\lambda_{max}(P^j)}})\frac{1}{N}  &\text{if} \parallel A^j\parallel= 1
	\end{cases}
	\end{align}\\
\end{lemma}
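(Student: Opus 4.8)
The plan is to prove recursive feasibility by the usual candidate-solution argument: build an admissible input sequence for $\mathbb{P}_\epsilon(x(t_{k+1}))$ out of the feasible sequence at $t_k$ and then verify each class of constraints separately. Because $M_k$ control moves are applied open-loop between samples, I would take the candidate to be the $M_k$-step shift of the old sequence, appended with the terminal feedback: for $l\in\mathbb{N}_{[0,N-M_k-1]}$ set $\tilde u^i(t_{k+1}+l|t_{k+1})=u^i(t_k+M_k+l|t_k)$, and for $l\in\mathbb{N}_{[N-M_k,N-1]}$ set $\tilde u^i(t_{k+1}+l|t_{k+1})=K^i\tilde z^i(t_{k+1}+l|t_{k+1})$. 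Since the candidate nominal state is reset to the measured $x^i(t_{k+1})$, along the shifted block it differs from the old prediction by exactly the propagated tube error, $\tilde z^i(t_{k+1}+l|t_{k+1})-z^i(t_k+M_k+l|t_k)=(A^i)^le^i(t_k+M_k|t_k)$, whose size is controlled by Lemma 2.1.

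For the in-horizon local state constraints I would argue purely by set arithmetic, requiring no disturbance bound: since $e^i(t_k+M_k|t_k)\in\mathcal{R}^i_{M_k}$ and $\mathcal{R}^i_{M_k+l}=\mathcal{R}^i_l\oplus(A^i)^l\mathcal{R}^i_{M_k}$, the shifted state (which lies in $\mathcal{X}^i\ominus\mathcal{R}^i_{M_k+l}$) plus the error lands in $\mathcal{X}^i\ominus\mathcal{R}^i_l=\mathcal{Z}^i_l$, while the input constraints on the shifted block are inherited unchanged. The terminal block is handled through the terminal set: I would show the hand-over state $\tilde z^i(t_{k+1}+N-M_k|t_{k+1})$, which equals the old terminal state (with $P^i$-norm $\le\varepsilon^i$) plus $(A^i)^{N-M_k}e^i(t_k+M_k|t_k)$, stays inside $\mathcal{X}^i_r$. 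Bounding the error term via Lemma 2.1 and maximising over $M_k\in[1,N]$ (worst case $M_k=N$) produces exactly $\sqrt{\lambda_{max}(P^i)}\,\bar w^i(1-\|A^i\|^N)/(1-\|A^i\|)\le r^i-\varepsilon^i$, i.e.\ the local condition (1). Once the hand-over state is in $\mathcal{X}^i_r$, RCIS-invariance under $K^i$ keeps the terminal block feasible for state and input constraints, and one step of $K^i$ contracts the $P^i$-norm by $\sqrt{1-\lambda_{min}(Q^i)/\lambda_{max}(P^i)}$; since $M_k\ge1$ and $\sqrt{1-\lambda_{min}(Q^i)/\lambda_{max}(P^i)}\,r^i\le\varepsilon^i$, the candidate terminal state returns to $\mathcal{X}^i_\varepsilon=T^i_f$.

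For the coupled constraint (15a) along the shifted block I would exploit that the tightening $\epsilon(\cdot)$ was built precisely to absorb this error: subtracting the old constraint at index $M_k+l$ from the one desired at index $l$, the slack needed is $\epsilon(M_k+l)-\epsilon(l)$, and a direct computation shows this equals the bound $\sum_i\|\Psi^i_x\|\,\|(A^i)^le^i\|$ supplied by the geometric sum, so (15a) holds automatically along the shifted horizon. The remaining and, in my view, hardest piece is the coupled constraint on the terminal block together with (15b): there the candidate states lie only in $\mathcal{X}^i_r$ (resp.\ $\mathcal{X}^i_\varepsilon$), so $\sum_i\Psi^i_N\tilde z^i$ must be bounded against the fully accumulated tightening $(1-\epsilon(N))\mathbf{1}_p$. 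I would bound each term by $\|\Psi^i_N\|\,r^i/\sqrt{\lambda_{max}(P^i)}$, split the global budget $\mathbf{1}_p$ equally among the $M$ subsystems, and charge the residual $\epsilon(N)$ to the disturbance-dependent part; imposing $\|\Psi^i_N\|\big(r^i/\sqrt{\lambda_{max}(P^i)}+(\text{disturbance term})\big)\le 1/M$ per subsystem and rearranging reproduces the global condition (2). The main obstacle is exactly this last step: keeping the per-subsystem budget bookkeeping consistent across the $\|A^i\|\ne1$ and $\|A^i\|=1$ cases, verifying that the equal-split bound is tight enough to recover the stated coefficients $1/(M\|\Psi^i_N\|)$, and confirming $0<\epsilon(1)<\dots<\epsilon(N)<1$ so that all tightened sets remain nonempty and contain the origin.
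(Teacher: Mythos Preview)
Your proposal is correct and follows essentially the same route as the paper: the same shifted-plus-terminal-feedback candidate, the same set-arithmetic check for (6e), the same use of the local condition to land the hand-over state in $\mathcal{X}^i_r$ and the Lyapunov contraction $\sqrt{1-\lambda_{\min}(Q^i)/\lambda_{\max}(P^i)}\,r^i\le\varepsilon^i$ to recover $\mathcal{X}^i_\varepsilon$, and the same $\epsilon(M_k+l)-\epsilon(l)$ slack argument for (15a) on the shifted block. The piece you flag as hardest---the per-subsystem $1/M$ budget split yielding the global condition (2)---is exactly what the paper isolates as Lemma~3.1 and proves in Appendix~A, so your bookkeeping concern is resolved there in precisely the way you sketch.
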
 
The proof of lemma 4.1 locate at appedix B. Through the recursive feasibility lemma of nominal system, we can conclude the theorem of recursive feasibility and constraints satisfaction in real system.
\begin{theorem}[Recursive feasibility and constraints satisfaction]
	If the $\mathbb{P}_\epsilon(x(t_0))$ is feasible and the conditions in lemma 4.1 is satisfied, then for the distributed system $(1)$ under algorithm 2, it holds that \\
	(1)$\mathbb{P}_\epsilon(x(t_k))$  is feasible for all $t_k$ and $k \in \mathbb{N}$;\\
	(2)$x^i(t)\in \mathcal{X}^i, u^i(t)\in \mathcal{U}^i$,  $\sum_{i=1}^{M}(\Psi^i_xx^i(t)+\Psi^i_uu^i(t))\leq\textbf{1}_p$ for every realization $w^i(t)\in \mathcal{W}^i.$
\end{theorem}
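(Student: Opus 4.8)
The plan is to establish the two claims in sequence: first recursive feasibility by induction, then transfer the constraint satisfaction from the nominal system to the real system through the tube decomposition $x^i = z^i + e^i$.

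For claim (1), I would argue by induction on $k$. The base case $k=0$ is precisely the standing hypothesis that $\mathbb{P}_\epsilon(x(t_0))$ is feasible. For the inductive step, I would assume $\mathbb{P}_\epsilon(x(t_k))$ is feasible and observe that the local and global disturbance bounds assumed in the theorem are exactly the hypotheses of Lemma 4.1; that lemma then delivers feasibility of $\mathbb{P}_\epsilon(x(t_{k+1}))$. Since the self-triggering rule (14) produces $M_k = \min_i M_k^i \in \mathbb{N}_{[1,N]}$, the next sampling instant $t_{k+1} = t_k + M_k$ is well defined, so the induction closes and $\mathbb{P}_\epsilon(x(t_k))$ is feasible for every $k \in \mathbb{N}$.

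For claim (2), I would fix an arbitrary real time $t$ and locate it inside the active inter-sampling interval, writing $t = t_k + l$ with $0 \le l \le M_k \le N$. The structural fact driving the local constraints is error containment: since $e^i(t_k|t_k)=0$ and the error obeys the open-loop recursion (5) with $w^i \in \mathcal{W}^i$, one gets $e^i(t_k+l|t_k) = \sum_{j=0}^{l-1}{A^i}^j w^i(t_k+l-1-j) \in \mathcal{R}_l^i$. Combined with the nominal feasibility from claim (1), namely $z^i(t_k+l|t_k) \in \mathcal{Z}_l^i = \mathcal{X}^i \ominus \mathcal{R}_l^i$ (and $z^i(t_k+N|t_k) \in T_f^i$ for the terminal index), the Minkowski containment yields $x^i(t) = z^i(t_k+l|t_k) + e^i(t_k+l|t_k) \in (\mathcal{X}^i \ominus \mathcal{R}_l^i) \oplus \mathcal{R}_l^i \subseteq \mathcal{X}^i$. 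The input constraint follows more directly: because Algorithm 2 applies the nominal optimal sequence in open loop over the interval, $u^i(t)$ inherits $u^i(t) \in \mathcal{U}^i$ from the nominal constraint (6f), while in the terminal branch where $u^i = K^i x^i$ admissibility instead comes from the invariance property (11) of $\mathcal{X}^i_{\varepsilon^i}$.

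The global coupled constraint is where the work concentrates, and here I would invoke Lemma 3.1. By claim (1) the tightened nominal constraints (15) hold along the predicted trajectory, so adding the error contribution gives $\sum_i (\Psi^i_x x^i + \Psi^i_u u^i) = \sum_i (\Psi^i_x z^i + \Psi^i_u u^i) + \sum_i \Psi^i_x e^i \le (1-\epsilon(l))\textbf{1}_p + \sum_i \Psi^i_x e^i$, and the global disturbance bound is exactly what forces the accumulated term $\sum_i \Psi^i_x e^i$ to be dominated by $\epsilon(l)\textbf{1}_p$, recovering (3). I expect the main obstacle to be the bookkeeping of the self-triggered open-loop phase: one must check that both the tube containment and the constraint-tightening estimate hold at every intermediate step $l$ of the interval $[t_k, t_k+M_k]$ rather than only at re-sampling instants, and that the transition index $l = M_k$ is consistently handled by the reset $z^i(t_{k+1}|t_{k+1}) = x^i(t_{k+1})$ at the start of the next interval, in both operating modes of Algorithm 2. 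Since $M_k \le N$ and $\epsilon(l)$ together with the sets $\mathcal{R}_l^i$ are defined for all $l \in \mathbb{N}_{[0,N]}$, every intermediate step is covered, which closes the argument.
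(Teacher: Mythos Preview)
Your plan follows essentially the same route as the paper: induction via Lemma~4.1 for claim~(1), and for claim~(2) the tube decomposition $x^i = z^i + e^i$ together with Lemma~3.1 to transfer the tightened nominal constraint to the real coupled constraint. The structural ingredients and the key lemmas you invoke are exactly those the paper uses.

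The one place where the paper is more explicit than your sketch is the case analysis once subsystems begin to enter their terminal regions. Algorithm~2 runs a genuinely different closed loop after $x^i$ reaches $\mathcal{X}^i_\varepsilon$: the control becomes $u^i = K^i x^i$ applied to the \emph{real} state, so your decomposition $\sum_i(\Psi^i_x x^i + \Psi^i_u u^i) = \sum_i(\Psi^i_x z^i + \Psi^i_u u^i) + \sum_i \Psi^i_x e^i$ no longer applies verbatim, because the applied $u^i$ is not a predicted control from (15a). The paper handles this by splitting the time axis at $t_{out}=\min_i t^i$ and $t_{in}=\max_i t^i$, giving three regimes: all systems solving the OCP, a mixed regime, and all systems in terminal feedback. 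In the terminal regime the paper uses the robust invariance of $\mathcal{X}^i_r$ under $K^i$ plus the disturbance bound to keep $x^i \in \mathcal{X}^i_r$ and then bounds $\sum_i \Psi^i_N x^i$ directly via the last estimate of Lemma~3.1; in the mixed regime it reconciles the two control laws by noting that the terminal feedback coincides with the tail of the candidate sequence. You acknowledge ``both operating modes'' but do not articulate the mixed phase; when you fill in the details, that is where the extra bookkeeping lives.
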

\begin{proof}
	By lemma 4.1, the assumption that feasible of initial optimization problem $\mathbb{P}_\epsilon(x(t_0))$  implies all $\mathbb{P}_\epsilon(x(t_k))$  is feasible for all $t_k$ and $k \in \mathbb{N}$, which proves (1). \\
	\text{\quad} To show the satisfaction of local constriants, the time horizon is divided into two parts, $k\in\mathbb{N}_{[0, t^i]}$ and $k\in\mathbb{N}_{[ t^i, \infty]}$. Where $t^i$ denotes the time when the state of system $i$ first enter $\mathcal{X}^i_{\varepsilon}$. for every sampling time $t_k$ before the time $t^i$, from (30) and (15), for $ l \in \mathbb{N}_{\le M_k}$, we have the close-loop system formulat as 
	\begin{align}
	x^i(t_k+l+1) &= A^ix^i(t_k+l) + B^iu^{i,*}(t_k+l) + w^i(t_k + l)\notag\\
	&= A^i(z^{i,*}(t_k+l|t_k) + e^i(t_k+l|t_k)) + B^iu^{i,*}(t_k+l|t_k)+ w^i(t_k + l)\notag\\
	&= z^{i,*}(t_k+l+1|t_k) + e^i(t_k+l+1|t_k)\notag\\
	&\in \mathcal{Z}^i_{l+1} \oplus \mathcal{R}^i_{l+1} \subseteq \mathcal{X}\notag
	\end{align}
	\indent Further, according to constraints $(6f)$, $u^i(t_k+l+1) = u^{i,\star}(t_k+l+1) \in \mathcal{U}^i$.\\
	\indent Now considering the sampling time after $t^i$, the state of system $i$ have already enter $\mathcal{X}^i_{\varepsilon}$. The local control $u^i(t^i+l) = K^ix^i(t^i+l) \in \mathcal{U}^i, l \in \mathbb{N}$ is applied to the real system as dual control strategy.
	So the close-loop system $i$ after $t^i$ is
	\begin{align}
	x^i(t^i+l+1) &= A^ix^i(t^i+l) + B^iK^ix^i(t^i+l) + w^i(t^i+l)\notag\\
	&= (A^i + B^iK^i)x^i(t^i+l) + w^i(t^i+l)\notag
	\end{align}
	from the definition of RCIS $\mathcal{X}^i_{\varepsilon}$, $x^i(t^i) \in \mathcal{X}^i_{\varepsilon}$ implies $(A^i + B^iK^i)x^i(t^i) \in \mathcal{X}^i_{\varepsilon}, l \in \mathbb{N}$, according to the relation of disturbance and RCIS, $  \lambda_{max}(P^i)\bar{w^i} \le r^i - \varepsilon^i$, it holds from (8) that $	x^i(t^i+1) \in \mathcal{X}^i_{r} \in \mathcal{X}$. Consider $x^i(t^i+l)\in \mathcal{X}^i_{r}$ and (B3) we have 
	\begin{align}
	&\parallel(A^i + B^iK^i)x^i(t^i+l)\parallel_{P^i} \le \varepsilon^i\notag\\
	&\parallel x^i(t^i+1+1)\parallel_{P^i} \le \parallel(A^i + B^iK^i)x^i(t^i+l)\parallel_{P^i} +\parallel w^i(t^i)\parallel_{P^i}\notag\\
	&\le\varepsilon^i + r^i - \varepsilon^i=r^i\notag 
	\end{align}
	\indent Then $x^i(t^i+l)\in\mathcal{X}^i_{r}\in\mathcal{X}^i, u^i(t^i+l) = K^ix^i(t^i+l) \in \mathcal{U}^i, l \in \mathbb{N} $. \\
	\indent We denote $t_{out}=\min\{t^i\}$, $t_{in}=\max\{t^i\}$, for every sampling time $t_k$ before the time $t_{out}$, according to constraints $(15a)$ and lemma 3.1, we have 
	\begin{align}
	\sum_{i=1}^{M}(\Psi^i_xx^i(t_k)+\Psi^i_uu^i(t_k))\leq\textbf{1}_p\notag
	\end{align}	
	\indent For every sampling time $t_k+l$ between  $t_{out}$ and $t_{in}$, assume L subsystem have already get in terminal set. As we take $u^i(t_k+l)=u^i(t_k+l|t_k)$ and $u^i(t_k+l|t_k)=K^ix^i(t_k+l|t_k)$ when $x^i(t_k+l|t_k)\in\mathcal{X}^i_{r}$, according to constraints $(15a)$ and lemma 3.1, then we have 
	\begin{align}
	&\sum_{i=1}^{M-L}(\Psi^i_xx^i(t_k+l)+\Psi^i_uu^i(t_k+l))+ \sum_{i=1}^{L}(\Psi^i_xx^i(t_k+l)+\Psi^i_uK^ix^i(t_k+l))\notag\\
	&=\sum_{i=1}^{M-L}(\Psi^i_xx^i(t_k+l|t_k)+\Psi^i_uu^i(t_k+l|t_k))+ \sum_{i=1}^{L}(\Psi^i_xx^i(t_k+l|t_k)+\Psi^i_uK^ix^i(t_k+l|t_k))\notag\\
	&=\sum_{i=1}^{M-L}(\Psi^i_xx^i(t_k+l|t_k)+\Psi^i_uu^i(t_k+l|t_k))+ \sum_{i=1}^{L}(\Psi^i_xx^i(t_k+l|t_k)+\Psi^i_uu^i(t_k+l|t_k))\notag\\
	&=\sum_{i=1}^{M}(\Psi^i_xx^i(t_k+l|t_k)+\Psi^i_uu^i(t_k+l|t_k))\le\textbf{1}_p\notag
	\end{align}
	\indent For every sampling time $t_k$ after the time $t_{in}$, according to last conclusion in lemma 3.1, we have 
	\begin{align}
	\sum_{i=1}^{M}\Psi^i_Nx^i(t_{k}) \le (1- \epsilon(N)) \textbf{1}_p  \le\textbf{1}_p \notag
	\end{align}
	Hence, we have already  proved (2) for different situation of all sampling time. 
\end{proof}
In the following part, we concentrate our attentions to the stability of whole distributed system. Fisrt we give a normal definition of stability in discrete-time system.
\begin{definition}
\citep{Sontag199520}	System $x(k+1)=f(x(k),u(k))$ is (globally) input-to-state stable(ISS) if there exist a $\mathcal{KL}$-function $\beta : \mathbb{R}_{\ge 0} \times \mathbb{R}_{\ge 0} \to \mathbb{R}_{\ge 0}$ and a $\mathcal{K}$-function $\gamma$ such that, for each input $u \in l_{\infty}^m $ and each $\xi \in \mathbb{R}^n $, it holds that
\begin{align}
x(k,\xi,u) \le \beta(k,|\xi|) + \gamma(u)
\end{align}
for each $ t \in \mathbb{Z}^n$.
\end{definition}
When the dynamics comes to a closed-loop perturbed system, we introduce a common lemma to simplify the prove of ISS.
\begin{lemma}
\citep{JIANG2001857} A system of the form $x(k+1)=f(x(k),w(k))$ is input-to-state stability(ISS) if and only if there exists a continuous ISS-lyapunov fuction $V(x(k))$ such that for $\mathcal{K}_\infty$ functions $\rho_1(\centerdot),\rho_2(\centerdot)$, and $\rho_3(\centerdot)$, and a $\mathcal{K}$ function $\alpha(\centerdot)$, it satisfies with 
\begin{align}
&\rho_1(\parallel x(k)\parallel) \le V(x(k)) \le \rho_2(\parallel x(k)\parallel) \notag\\
&V(x(k+1)) - V(x(k)) \le \alpha(\bar{w}) - \rho_3(\parallel x(k)\parallel) \notag
\end{align}  
\end{lemma}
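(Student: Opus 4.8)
The plan is to prove both implications, since the statement is an equivalence. The direction ``existence of an ISS-Lyapunov function $\Rightarrow$ ISS'' is a comparison-type argument and is routine, whereas the converse ``ISS $\Rightarrow$ existence of an ISS-Lyapunov function'' is the genuine converse-Lyapunov statement and is where the difficulty concentrates.

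For sufficiency, I would start from the dissipation inequality $V(x(k+1)) - V(x(k)) \le \alpha(\bar w) - \rho_3(\parallel x(k)\parallel)$ together with the sandwich bound $\rho_1(\parallel x(k)\parallel) \le V(x(k)) \le \rho_2(\parallel x(k)\parallel)$. Because $\rho_2 \in \mathcal{K}_\infty$ is invertible, setting $\tilde{\rho}_3 := \rho_3 \circ \rho_2^{-1} \in \mathcal{K}_\infty$ gives $V(x(k+1)) \le V(x(k)) - \tilde{\rho}_3(V(x(k))) + \alpha(\bar w)$. The key observation is that $V$ strictly decreases whenever $\tilde{\rho}_3(V(x(k))) > \alpha(\bar w)$, so the trajectory is driven into the residual sublevel set $\{\, V \le \tilde{\rho}_3^{-1}(\alpha(\bar w)) \,\}$ and is trapped there thereafter. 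A standard discrete-time comparison lemma then yields $V(x(k)) \le \beta_V(V(x(0)),k) + \gamma_V(\alpha(\bar w))$ with $\beta_V \in \mathcal{KL}$ and $\gamma_V \in \mathcal{K}$; re-applying the sandwich bound converts this into the ISS estimate $\parallel x(k)\parallel \le \beta(k,\parallel x(0)\parallel) + \gamma(\bar w)$ of Definition 4.3.

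For necessity, I would use the ISS estimate itself to manufacture a Lyapunov function. By Sontag's $\mathcal{KL}$-lemma the bound $\beta$ can be majorized as $\beta(s,t) \le \theta_1(\theta_2(s) e^{-t})$ for suitable $\theta_1, \theta_2 \in \mathcal{K}_\infty$, which decouples the temporal decay from the dependence on the initial condition. I would then define $V(x) := \sup_{k \ge 0} \eta(\parallel x(k,x,0)\parallel)$, a weighted supremum of a $\mathcal{K}_\infty$ function $\eta$ taken along the undisturbed (or worst-case bounded-disturbance) trajectory issuing from $x$; the exponential factor makes the supremum finite and continuous in $x$. The lower and upper bounds $\rho_1,\rho_2$ follow by evaluating at $k=0$ and by invoking the $\mathcal{KL}$ estimate respectively, and the one-step decrease is obtained by shifting the supremum index by one and isolating the disturbance contribution into the additive term $\alpha(\bar w)$.

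The hard part is precisely this necessity direction: one must construct $V$ so that it is simultaneously continuous, bounded by $\mathcal{K}_\infty$ functions from both sides, and strictly dissipative, while routing the disturbance influence into $\alpha(\bar w)$ rather than into the state-dependent decrease $\rho_3$. This is exactly the discrete-time converse ISS-Lyapunov theorem of Jiang and Wang, so in the present paper I would not reproduce the construction but simply invoke their result; only the elementary sufficiency direction needs to be kept in view, since it is the decrease property that will actually be checked for the closed-loop RDMPC cost in the subsequent stability analysis.
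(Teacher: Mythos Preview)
Your proposal is a reasonable sketch of both directions of the ISS-Lyapunov characterization, and you correctly identify that the converse direction is the nontrivial part due to Jiang and Wang. However, the paper does not prove this lemma at all: it is stated with a citation to \citep{JIANG2001857} and used as a black box to establish Theorem~4.5. So there is no ``paper's own proof'' to compare against --- the authors simply invoke the result, exactly as you suggest doing in your final paragraph. Your instinct there is the right one: in the context of this paper the lemma is a quoted tool, and only the sufficiency direction (the dissipation inequality implying ISS) is actually exercised in the subsequent stability proof.
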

Then we give the input-to-state stability theorem of RDMPC scheme as follows.
\begin{theorem}
Given fesibility of $\mathbb{P}_\epsilon(x(t_0))$ and satisfaction of lemma 4, the closed-loop system in (30) is ISS.
\end{theorem}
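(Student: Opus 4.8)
The plan is to use the aggregate optimal value function as an ISS-Lyapunov function and to verify the two conditions of the ISS-Lyapunov characterization of \citet{JIANG2001857} along the sampling sequence $\{t_k\}$. Concretely, I would set
\[
V(x(t_k)) \triangleq \sum_{i=1}^{M} J^{i,*}(x^i(t_k),\mathbf{u}^{i,*}(t_k)),
\]
the sum of the optimal costs of the tightened problems $\mathbb{P}_\epsilon(x^i(t_k))$, which is well defined for every $k$ by the recursive feasibility established in Theorem 4.2. The state is taken to be the stacked actual state $x(t_k)=[x^1(t_k)^T,\dots,x^M(t_k)^T]^T$, and ISS is to be shown with respect to the stacked disturbance bound $\bar w$.

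For the sandwich bounds $\rho_1(\|x\|)\le V(x)\le\rho_2(\|x\|)$, the lower bound is immediate: since the initial-condition constraint $(6b)$ forces $z^i(t_k|t_k)=x^i(t_k)$ and the stage cost contains $\|z^i(t_k|t_k)\|_{Q^i}^2$, one obtains
\[
V(x(t_k)) \ge \sum_{i=1}^{M}\lambda_{min}(Q^i)\,\|x^i(t_k)\|^2 \ge \bigl(\min_i\lambda_{min}(Q^i)\bigr)\|x(t_k)\|^2 =: \rho_1(\|x(t_k)\|).
\]
For the upper bound I would exploit the terminal ingredients: inside $\mathcal{X}^i_\varepsilon$ the local law $u^i=K^iz^i$ is feasible and, by the Lyapunov equation $(10)$, yields $J^{i,*}(x^i)\le\|x^i\|_{P^i}^2$; this local quadratic bound is then extended to the whole feasible region by a reachability and compactness argument based on $(A1)$ and the polytopic constraints $(A2)$, giving $\rho_2(\|x\|)=c\|x\|^2$ for a suitable constant $c$.

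The descent condition is where the self-triggering structure enters. Since the common interval $M_k=\min_i M_k^i$ is applied to every subsystem in open loop, I would invoke Lemma 2.2 for each $i$ with the argument $M_k$ in place of $M_k^i$, giving
\[
V(x(t_{k+1}))-V(x(t_k)) \le \sum_{i=1}^{M} g^i\bigl(M_k,x^i(t_k),\mathbf{u}^{i,*}(t_k),\bar w^i\bigr).
\]
Splitting each $g^i$ into its disturbance part $g_0^i$ and the accumulated stage cost, and retaining at least the $l=0$ term of the stage-cost sum (legitimate since $M_k\ge1$), I would lower-bound the stage cost by $\sum_i\lambda_{min}(Q^i)\|x^i(t_k)\|^2=:\rho_3(\|x(t_k)\|)$, and upper-bound $\sum_i g_0^i$ by a $\mathcal{K}$-function $\alpha(\bar w)$ that vanishes as $\bar w\to0$, using that the disturbance-induced nominal-state gap of Lemma 2.1 is state-independent and that any mixed state/disturbance cross terms can be split by Young's inequality (the state part absorbed into $\rho_3$, the remainder into $\alpha$). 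This yields $V(x(t_{k+1}))-V(x(t_k))\le\alpha(\bar w)-\rho_3(\|x(t_k)\|)$, and the characterization of \citet{JIANG2001857} then gives ISS.

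The main obstacle will be the interplay between the per-subsystem triggering times $M_k^i$ and the applied common step $M_k=\min_i M_k^i$: Lemma 2.2 and the triggering rule $(14)$ are formulated for $M_k^i$, whereas the genuine decrease must be controlled over $M_k$, so one has to check that replacing $M_k^i$ by the smaller $M_k$ does not destroy the dominance of the accumulated stage cost over $g_0^i$. A secondary difficulty is the phase after a subsystem enters $\mathcal{X}^i_\varepsilon$ and switches to the local feedback $K^ix^i$: there the decrease of $J^{i,*}$ must be replaced by the Lyapunov decrease from $(10)$, with $\|x^i\|_{P^i}^2$ serving as a local ISS-Lyapunov function for the contraction $A^i+B^iK^i$ under bounded $w^i$, and the two regimes must be patched into a single $V$. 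Establishing the global upper bound $\rho_2$ is the remaining technical point I would expect to require the most care.
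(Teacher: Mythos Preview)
Your proposal is correct and follows the same approach as the paper: take the aggregate optimal value function as the ISS-Lyapunov function, apply Lemma~2.2 subsystem-by-subsystem for the decrease, retain only the $l=0$ stage-cost term to produce $\rho_3$, and absorb the residual disturbance-dependent part $\sum_i g_0^i$ into a $\mathcal{K}$-function $\alpha(\bar w)$. The paper's proof is in fact less careful than yours about the very obstacles you flag---it writes $M_k^i$ in the bound rather than the actually applied common step $M_k=\min_i M_k^i$, declares the sandwich bounds ``trivially proved'' without constructing $\rho_2$, and does not address the post-terminal feedback regime at all---so your write-up is a faithful and more rigorous rendering of the same argument.
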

\begin{proof}
The ISS-Lyapunov function can be chosen as $V(\textbf{x}(t_k)) \triangleq J^*(\textbf{x}(t_k),\textbf{u}^*(t_k))$, apparently $J^*(\textbf{x}(t_k),\textbf{u}^*(t_k)) = \sum_{i=1}^{M}J^*(x^i(t_{k}),u^i(t_k))$ is a $\mathcal{K}_\infty$ functions where $\textbf{x}(t_k)) = [x^1(t_{k}),x^2(t_{k}),\dots,x^M(t_{k})]$, $\textbf{u}^*(t_k)) = [u^{1,*}(t_{k}),u^{2,*}(t_{k}),\dots,u^{M,*}(t_{k})]$ is boundness in local constriants (2), so the first ISS condition is trivially proved. Then from lemma 2,the difference between $V(\textbf{x}(t_k))$ and $V(\textbf{x}(t_k+1))$ is bounded by
\begin{align}
V(\textbf{x}(t_k+1)) - V(\textbf{x}(t_k))&= \sum_{i=1}^{M}\{J^*(x^i(t_{k+1}),u^i(t_{k+1})) - J^*(x^i(t_k),u^i(t_k))\}\notag\\
&\le \sum_{i=1}^{M}\{g^i(M^i_k,x^i(t_k),u^{i,*}(t_k),\bar{w^i})\}\notag\\
&\le \sum_{i=1}^{M}\{g_0^i(M^i_k,x^i(t_k),u^*(t_k),\bar{w^i}) - [||x^i(t_k)||_{Q^i}^2 + ||u^{i,*}(t_k)||_{R^i}^2]\}\notag \\
&\le \sum_{i=1}^{M}\{\beta(\bar{w^i}) - ||x^i(t_k)||_{Q^i}^2\}\notag\\
&\le \alpha(\bar{\textbf{w}^i}) - \sum_{i=1}^{M}||x^i(t_k)||_{Q^i}^2\notag
\end{align}
where the bound of  distrubance in real system defined as $\bar{\textbf{w}^i} = [\bar{w^1},\bar{w^2},\dots,\bar{w^M}]$, while $\alpha(\bar{\textbf{w}^i}) \triangleq \sum_{i=1}^{M}\beta(\bar{w^i})$ is a $\mathcal{K}$ function defined as the maximum of $\sum_{i=1}^{M}g^i(M^i_k,x^i(t_k),u^{i,*}(t_k),\bar{w^i})$ and $\sum_{i=1}^{M}||x^i(t_k)||_{Q^i}^2$ is absolutely $\mathcal{K}_\infty$ functions on $\textbf{x}(t_k)$.
\end{proof}
\section{Numerical simulation}
The numerical example is a 4-agent system with same system matrix $A^{1,2,3,4} = [1.1, 0.12;0.35 ,0.0075]$, $B^{1,2,3,4} = [1.5;0.5]$, the local constraints in all subsystems are $\mathcal{X}^i = \{x^i||x^{i,1}|\le20 , |x^{i,1}| \le 5 \}$ and $\mathcal{U}^i = \{u^i||u^i|\le 2\}$, while global constraints is $\parallel \sum_{i=1}^4\Psi^i_xx^i+0.01u^1 + 0.02u^2 + 0.03u^3 + 0.04u^4\parallel \le 10\textbf{1}_p$. Where $\Psi^i_x=[0.08,0.02],p=1$. The set of bounded disturbance are satisfied with condition of lemma 3, we set that as $\mathcal{W}^i = [-0.3,0.3] \times [-0.3,0.3]$. Set weight matrix $Q^i = [1,0;0,1],R^i = 0.1, i =1,2,3,4$, we obtain from LQR feedback control law and Riccati equation that $P^{1,2,3,4} = [1.0516, 0.0057;0.0057, 1.0015], K^{1,2,3,4} = [-0.7033,-0.0710]$, and initiall state $x^1(0) = [-19;-4], x^2(0) = [-18;-3], x^3(0) = [-10;4], x^4(0) = [-18;3]$. The prediction horizon is chosen to be $N =5$ and simulation length is $T_{run}=30$ steps.
\begin{figure}[htbp]
\centering{\includegraphics[width=3.5in,height=2.5in]{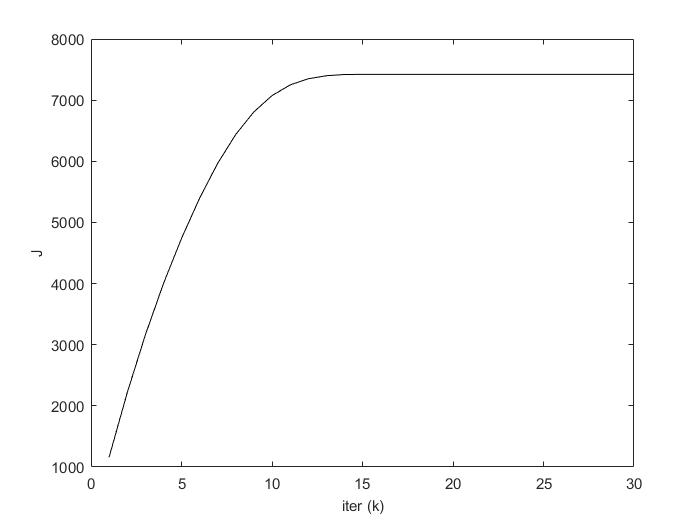}}
\caption{Cost function of overall system under algorithm 2}
\label{fig}
\end{figure}
\begin{figure}[htbp]
\centering{\includegraphics[width=3.5in,height=2.5in]{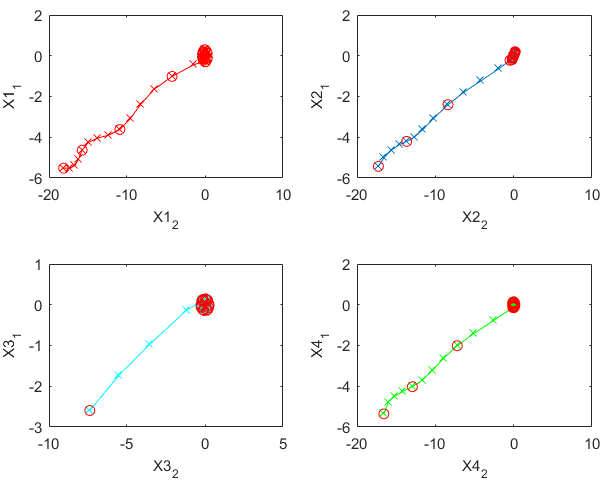}}
\caption{System trajectories under algorithm 2}
\label{fig}
\end{figure}
\begin{figure}[htbp]
\centering{\includegraphics[width=3.5in,height=2.5in]{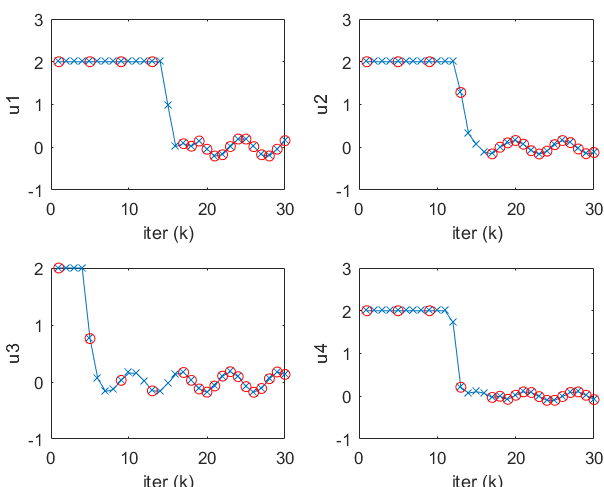}}
\caption{Control inputs trajectories under algorithm 2}
\label{fig}
\end{figure}
\begin{figure}[htbp]
\centering{\includegraphics[width=3.5in,height=2.5in]{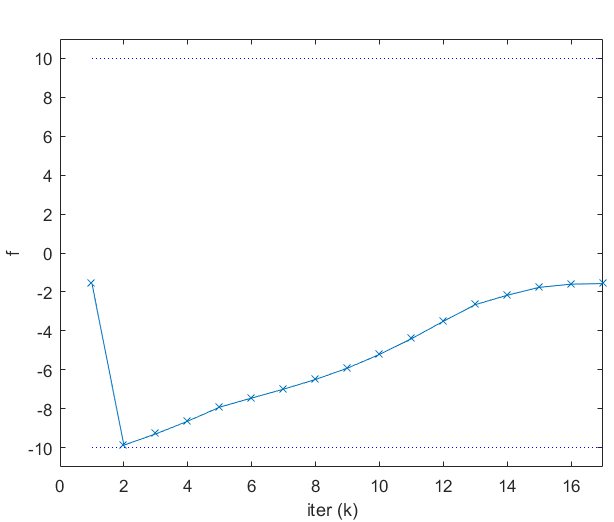}}
\caption{Global constraints of overall system}
\label{fig}
\end{figure}
\indent Fig.1 indicates the summarize of the cost function in all subsystem, from that we can know the objective function in $(6a)$ is convergence to a stable value within algorithm tolerance.\\ 
\indent Fig.2 shows the sate trajectories of all subsystem under algorithm 2 with different sequence of disturbances, the system state convergence to the neighborhood of the origin, the state trajectories reflect the excellent performance of algorithm we proposed from the stability side.\\ 
\indent Fig.3 dispalys the Control inputs trajectories under algorithm 2, the sampling instants is indicated by scarlet circles while each iteration steps denotes by blue crossmark. Iteration time of algorithm without and with self-triggering machanism is $T_{run}^{a1}=1.8190s$ and $T_{run}^{a2}=1.1097s$, the difference between computation time shows off the latter algorithm can lossen computation burden apparently. \\
\indent The evolution of global constraints function $f=\sum_{i=1}^{M}f^i$ has shown in Fig.4, in which the blue dashed line represents the maximum value of the global constraint. Together with Fig.2 and Fig.3, they illustrated the satisfaction of the local constraints on state and control input for every subsystem, as well as the global constraints in overall system.

\section{conclusions}
A self-triggered DMPC scheme has been proposed for disturbed linear systems with local and global constraints. The proposed scheme uses tube method to capture the bounded disturbance in subsystem, while it also gives a form of global constraints tightening based on upper bound of disturbance to apply parallel distributed ADMM methods. For closed-loop system under algorithm 2, the properties about the recursive feasibility of OCP and ISS stability of overall system are given, some numerical examples has been shown to demonstrated the good performance of proposed scheme.

\section*{Acknowledgement(s)}
An unnumbered section, e.g.\ \verb"\section*{Acknowledgements}", may be used for thanks, etc.\ if required and included \emph{in the non-anonymous version} before any Notes or References.

\section{References}
\bibliographystyle{apacite}
\bibliography{RDMPC1}

\begin{thebibliography}{}

\bibitem [\protect \citeauthoryear {%
Berglind%
, Gommans%
\BCBL {}\ \BBA {} Heemels%
}{%
Berglind%
\ \protect \BOthers {.}}{%
{\protect \APACyear {2012}}%
}]{%
BERGLIND2012342}
\APACinsertmetastar {%
BERGLIND2012342}%
\begin{APACrefauthors}%
Berglind, J\BPBI B.%
, Gommans, T.%
\BCBL {}\ \BBA {} Heemels, W.%
\end{APACrefauthors}%
\unskip\
\newblock
\APACrefYearMonthDay{2012}{}{}.
\newblock
{\BBOQ}\APACrefatitle {Self-triggered MPC for constrained linear systems and
  quadratic costs} {Self-triggered mpc for constrained linear systems and
  quadratic costs}.{\BBCQ}
\newblock
\APACjournalVolNumPages{IFAC Proceedings Volumes}{45}{17}{342 - 348}.
\newblock
\begin{APACrefURL}
  \url{http://www.sciencedirect.com/science/article/pii/S1474667016314677}
  \end{APACrefURL}
\newblock
\APACrefnote{4th IFAC Conference on Nonlinear Model Predictive Control}
\newblock
\begin{APACrefDOI} \doi{https://doi.org/10.3182/20120823-5-NL-3013.00058}
  \end{APACrefDOI}
\PrintBackRefs{\CurrentBib}

\bibitem [\protect \citeauthoryear {%
Bertsekas%
, Hager%
\BCBL {}\ \BBA {} Mangasarian%
}{%
Bertsekas%
\ \protect \BOthers {.}}{%
{\protect \APACyear {1998}}%
}]{%
bertsekas1998nonlinear}
\APACinsertmetastar {%
bertsekas1998nonlinear}%
\begin{APACrefauthors}%
Bertsekas, D\BPBI P.%
, Hager, W.%
\BCBL {}\ \BBA {} Mangasarian, O.%
\end{APACrefauthors}%
\unskip\
\newblock
\APACrefYearMonthDay{1998}{}{}.
\newblock
{\BBOQ}\APACrefatitle {Nonlinear programming} {Nonlinear programming}.{\BBCQ}
\newblock
\APACaddressPublisher{}{Athena Scientific Belmont, MA}.
\PrintBackRefs{\CurrentBib}

\bibitem [\protect \citeauthoryear {%
Boyd%
}{%
Boyd%
}{%
{\protect \APACyear {2011}}%
}]{%
10.1145/2020408.2020410}
\APACinsertmetastar {%
10.1145/2020408.2020410}%
\begin{APACrefauthors}%
Boyd, S.%
\end{APACrefauthors}%
\unskip\
\newblock
\APACrefYearMonthDay{2011}{}{}.
\newblock
{\BBOQ}\APACrefatitle {Convex Optimization: From Embedded Real-Time to
  Large-Scale Distributed} {Convex optimization: From embedded real-time to
  large-scale distributed}.{\BBCQ}
\newblock
\BIn{} \APACrefbtitle {Proceedings of the 17th ACM SIGKDD International
  Conference on Knowledge Discovery and Data Mining} {Proceedings of the 17th
  acm sigkdd international conference on knowledge discovery and data mining}\
  (\BPG~1).
\newblock
\APACaddressPublisher{New York, NY, USA}{Association for Computing Machinery}.
\newblock
\begin{APACrefURL} \url{https://doi.org/10.1145/2020408.2020410}
  \end{APACrefURL}
\newblock
\begin{APACrefDOI} \doi{10.1145/2020408.2020410} \end{APACrefDOI}
\PrintBackRefs{\CurrentBib}

\bibitem [\protect \citeauthoryear {%
{Brunner}%
, {Heemels}%
\BCBL {}\ \BBA {} {Allg?wer}%
}{%
{Brunner}%
\ \protect \BOthers {.}}{%
{\protect \APACyear {2014}}%
}]{%
6862397}
\APACinsertmetastar {%
6862397}%
\begin{APACrefauthors}%
{Brunner}, F\BPBI D.%
, {Heemels}, W\BPBI P\BPBI M\BPBI H.%
\BCBL {}\ \BBA {} {Allg?wer}, F.%
\end{APACrefauthors}%
\unskip\
\newblock
\APACrefYearMonthDay{2014}{}{}.
\newblock
{\BBOQ}\APACrefatitle {Robust self-triggered MPC for constrained linear
  systems} {Robust self-triggered mpc for constrained linear systems}.{\BBCQ}
\newblock
\BIn{} \APACrefbtitle {2014 European Control Conference (ECC)} {2014 european
  control conference (ecc)}\ (\BPG~472-477).
\PrintBackRefs{\CurrentBib}

\bibitem [\protect \citeauthoryear {%
{Brunner}%
, {Heemels}%
\BCBL {}\ \BBA {} {Allg?wer}%
}{%
{Brunner}%
\ \protect \BOthers {.}}{%
{\protect \APACyear {2017}}%
}]{%
7922495}
\APACinsertmetastar {%
7922495}%
\begin{APACrefauthors}%
{Brunner}, F\BPBI D.%
, {Heemels}, W\BPBI P\BPBI M\BPBI H.%
\BCBL {}\ \BBA {} {Allg?wer}, F.%
\end{APACrefauthors}%
\unskip\
\newblock
\APACrefYearMonthDay{2017}{}{}.
\newblock
{\BBOQ}\APACrefatitle {Robust Event-Triggered MPC With Guaranteed Asymptotic
  Bound and Average Sampling Rate} {Robust event-triggered mpc with guaranteed
  asymptotic bound and average sampling rate}.{\BBCQ}
\newblock
\APACjournalVolNumPages{IEEE Transactions on Automatic
  Control}{62}{11}{5694-5709}.
\PrintBackRefs{\CurrentBib}

\bibitem [\protect \citeauthoryear {%
Chisci%
, Rossiter%
\BCBL {}\ \BBA {} Zappa%
}{%
Chisci%
\ \protect \BOthers {.}}{%
{\protect \APACyear {2001}}%
}]{%
CHISCI20011019}
\APACinsertmetastar {%
CHISCI20011019}%
\begin{APACrefauthors}%
Chisci, L.%
, Rossiter, J.%
\BCBL {}\ \BBA {} Zappa, G.%
\end{APACrefauthors}%
\unskip\
\newblock
\APACrefYearMonthDay{2001}{}{}.
\newblock
{\BBOQ}\APACrefatitle {Systems with persistent disturbances: predictive control
  with restricted constraints} {Systems with persistent disturbances:
  predictive control with restricted constraints}.{\BBCQ}
\newblock
\APACjournalVolNumPages{Automatica}{37}{7}{1019 - 1028}.
\newblock
\begin{APACrefURL}
  \url{http://www.sciencedirect.com/science/article/pii/S0005109801000516}
  \end{APACrefURL}
\newblock
\begin{APACrefDOI} \doi{https://doi.org/10.1016/S0005-1098(01)00051-6}
  \end{APACrefDOI}
\PrintBackRefs{\CurrentBib}

\bibitem [\protect \citeauthoryear {%
Dai%
, Yang%
, Qiang%
\BCBL {}\ \BBA {} Xia%
}{%
Dai%
\ \protect \BOthers {.}}{%
{\protect \APACyear {2019}}%
}]{%
DAI20191446}
\APACinsertmetastar {%
DAI20191446}%
\begin{APACrefauthors}%
Dai, L.%
, Yang, F.%
, Qiang, Z.%
\BCBL {}\ \BBA {} Xia, Y.%
\end{APACrefauthors}%
\unskip\
\newblock
\APACrefYearMonthDay{2019}{}{}.
\newblock
{\BBOQ}\APACrefatitle {Robust self-triggered MPC with fast convergence for
  constrained linear systems} {Robust self-triggered mpc with fast convergence
  for constrained linear systems}.{\BBCQ}
\newblock
\APACjournalVolNumPages{Journal of the Franklin Institute}{356}{3}{1446 -
  1467}.
\newblock
\begin{APACrefURL}
  \url{http://www.sciencedirect.com/science/article/pii/S0016003219300018}
  \end{APACrefURL}
\newblock
\begin{APACrefDOI} \doi{https://doi.org/10.1016/j.jfranklin.2018.12.009}
  \end{APACrefDOI}
\PrintBackRefs{\CurrentBib}

\bibitem [\protect \citeauthoryear {%
Farina%
\ \BBA {} Scattolini%
}{%
Farina%
\ \BBA {} Scattolini%
}{%
{\protect \APACyear {2012}}%
}]{%
FARINA20121088}
\APACinsertmetastar {%
FARINA20121088}%
\begin{APACrefauthors}%
Farina, M.%
\BCBT {}\ \BBA {} Scattolini, R.%
\end{APACrefauthors}%
\unskip\
\newblock
\APACrefYearMonthDay{2012}{}{}.
\newblock
{\BBOQ}\APACrefatitle {Distributed predictive control: A non-cooperative
  algorithm with neighbor-to-neighbor communication for linear systems}
  {Distributed predictive control: A non-cooperative algorithm with
  neighbor-to-neighbor communication for linear systems}.{\BBCQ}
\newblock
\APACjournalVolNumPages{Automatica}{48}{6}{1088 - 1096}.
\newblock
\begin{APACrefURL}
  \url{http://www.sciencedirect.com/science/article/pii/S000510981200129X}
  \end{APACrefURL}
\newblock
\begin{APACrefDOI} \doi{https://doi.org/10.1016/j.automatica.2012.03.020}
  \end{APACrefDOI}
\PrintBackRefs{\CurrentBib}

\bibitem [\protect \citeauthoryear {%
Gommans%
, Antunes%
, Donkers%
, Tabuada%
\BCBL {}\ \BBA {} Heemels%
}{%
Gommans%
\ \protect \BOthers {.}}{%
{\protect \APACyear {2014}}%
}]{%
GOMMANS20141279}
\APACinsertmetastar {%
GOMMANS20141279}%
\begin{APACrefauthors}%
Gommans, T.%
, Antunes, D.%
, Donkers, T.%
, Tabuada, P.%
\BCBL {}\ \BBA {} Heemels, M.%
\end{APACrefauthors}%
\unskip\
\newblock
\APACrefYearMonthDay{2014}{}{}.
\newblock
{\BBOQ}\APACrefatitle {Self-triggered linear quadratic control} {Self-triggered
  linear quadratic control}.{\BBCQ}
\newblock
\APACjournalVolNumPages{Automatica}{50}{4}{1279 - 1287}.
\newblock
\begin{APACrefURL}
  \url{http://www.sciencedirect.com/science/article/pii/S0005109814000685}
  \end{APACrefURL}
\newblock
\begin{APACrefDOI} \doi{https://doi.org/10.1016/j.automatica.2014.02.030}
  \end{APACrefDOI}
\PrintBackRefs{\CurrentBib}

\bibitem [\protect \citeauthoryear {%
{Heemels}%
, {Johansson}%
\BCBL {}\ \BBA {} {Tabuada}%
}{%
{Heemels}%
\ \protect \BOthers {.}}{%
{\protect \APACyear {2012}}%
}]{%
6425820}
\APACinsertmetastar {%
6425820}%
\begin{APACrefauthors}%
{Heemels}, W\BPBI P\BPBI M\BPBI H.%
, {Johansson}, K\BPBI H.%
\BCBL {}\ \BBA {} {Tabuada}, P.%
\end{APACrefauthors}%
\unskip\
\newblock
\APACrefYearMonthDay{2012}{}{}.
\newblock
{\BBOQ}\APACrefatitle {An introduction to event-triggered and self-triggered
  control} {An introduction to event-triggered and self-triggered
  control}.{\BBCQ}
\newblock
\BIn{} \APACrefbtitle {2012 IEEE 51st IEEE Conference on Decision and Control
  (CDC)} {2012 ieee 51st ieee conference on decision and control (cdc)}\
  (\BPG~3270-3285).
\PrintBackRefs{\CurrentBib}

\bibitem [\protect \citeauthoryear {%
Jiang%
\ \BBA {} Wang%
}{%
Jiang%
\ \BBA {} Wang%
}{%
{\protect \APACyear {2001}}%
}]{%
JIANG2001857}
\APACinsertmetastar {%
JIANG2001857}%
\begin{APACrefauthors}%
Jiang, Z\BHBI P.%
\BCBT {}\ \BBA {} Wang, Y.%
\end{APACrefauthors}%
\unskip\
\newblock
\APACrefYearMonthDay{2001}{}{}.
\newblock
{\BBOQ}\APACrefatitle {Input-to-state stability for discrete-time nonlinear
  systems} {Input-to-state stability for discrete-time nonlinear
  systems}.{\BBCQ}
\newblock
\APACjournalVolNumPages{Automatica}{37}{6}{857 - 869}.
\newblock
\begin{APACrefURL}
  \url{http://www.sciencedirect.com/science/article/pii/S0005109801000280}
  \end{APACrefURL}
\newblock
\begin{APACrefDOI} \doi{https://doi.org/10.1016/S0005-1098(01)00028-0}
  \end{APACrefDOI}
\PrintBackRefs{\CurrentBib}

\bibitem [\protect \citeauthoryear {%
Kurzhanski%
\ \BBA {} Filippova%
}{%
Kurzhanski%
\ \BBA {} Filippova%
}{%
{\protect \APACyear {{\protect \bibnodate {}}}}%
}]{%
Kurzhanski1993}
\APACinsertmetastar {%
Kurzhanski1993}%
\begin{APACrefauthors}%
Kurzhanski, A\BPBI B.%
\BCBT {}\ \BBA {} Filippova, T\BPBI F.%
\end{APACrefauthors}%
\unskip\
\newblock
\APACrefYearMonthDay{{\protect \bibnodate {}}}{}{}.
\newblock

\PrintBackRefs{\CurrentBib}

\bibitem [\protect \citeauthoryear {%
{Lazar}%
, {Heemels}%
\BCBL {}\ \BBA {} {Teel}%
}{%
{Lazar}%
\ \protect \BOthers {.}}{%
{\protect \APACyear {2013}}%
}]{%
6376100}
\APACinsertmetastar {%
6376100}%
\begin{APACrefauthors}%
{Lazar}, M.%
, {Heemels}, W\BPBI P\BPBI M\BPBI H.%
\BCBL {}\ \BBA {} {Teel}, A\BPBI R.%
\end{APACrefauthors}%
\unskip\
\newblock
\APACrefYearMonthDay{2013}{}{}.
\newblock
{\BBOQ}\APACrefatitle {Further Input-to-State Stability Subtleties for
  Discrete-Time Systems} {Further input-to-state stability subtleties for
  discrete-time systems}.{\BBCQ}
\newblock
\APACjournalVolNumPages{IEEE Transactions on Automatic
  Control}{58}{6}{1609-1613}.
\PrintBackRefs{\CurrentBib}

\bibitem [\protect \citeauthoryear {%
Limon%
, Alamo%
\BCBL {}\ \BBA {} Raimondo%
}{%
Limon%
\ \protect \BOthers {.}}{%
{\protect \APACyear {{\protect \bibnodate {}}}}%
}]{%
Limon2009}
\APACinsertmetastar {%
Limon2009}%
\begin{APACrefauthors}%
Limon, D.%
, Alamo, T.%
\BCBL {}\ \BBA {} Raimondo, D\BPBI M.%
\end{APACrefauthors}%
\unskip\
\newblock
\APACrefYearMonthDay{{\protect \bibnodate {}}}{}{}.
\newblock

\PrintBackRefs{\CurrentBib}

\bibitem [\protect \citeauthoryear {%
Mayne%
}{%
Mayne%
}{%
{\protect \APACyear {2014}}%
}]{%
MAYNE20142967}
\APACinsertmetastar {%
MAYNE20142967}%
\begin{APACrefauthors}%
Mayne, D\BPBI Q.%
\end{APACrefauthors}%
\unskip\
\newblock
\APACrefYearMonthDay{2014}{}{}.
\newblock
{\BBOQ}\APACrefatitle {Model predictive control: Recent developments and future
  promise} {Model predictive control: Recent developments and future
  promise}.{\BBCQ}
\newblock
\APACjournalVolNumPages{Automatica}{50}{12}{2967 - 2986}.
\newblock
\begin{APACrefURL}
  \url{http://www.sciencedirect.com/science/article/pii/S0005109814005160}
  \end{APACrefURL}
\newblock
\begin{APACrefDOI} \doi{https://doi.org/10.1016/j.automatica.2014.10.128}
  \end{APACrefDOI}
\PrintBackRefs{\CurrentBib}

\bibitem [\protect \citeauthoryear {%
Mayne%
, Kerrigan%
, van Wyk%
\BCBL {}\ \BBA {} Falugi%
}{%
Mayne%
\ \protect \BOthers {.}}{%
{\protect \APACyear {2011}}%
}]{%
doi:10.1002/rnc.1758}
\APACinsertmetastar {%
doi:10.1002/rnc.1758}%
\begin{APACrefauthors}%
Mayne, D\BPBI Q.%
, Kerrigan, E\BPBI C.%
, van Wyk, E\BPBI J.%
\BCBL {}\ \BBA {} Falugi, P.%
\end{APACrefauthors}%
\unskip\
\newblock
\APACrefYearMonthDay{2011}{}{}.
\newblock
{\BBOQ}\APACrefatitle {Tube-based robust nonlinear model predictive control}
  {Tube-based robust nonlinear model predictive control}.{\BBCQ}
\newblock
\APACjournalVolNumPages{International Journal of Robust and Nonlinear
  Control}{21}{11}{1341-1353}.
\newblock
\begin{APACrefURL}
  \url{https://onlinelibrary.wiley.com/doi/abs/10.1002/rnc.1758}
  \end{APACrefURL}
\newblock
\begin{APACrefDOI} \doi{10.1002/rnc.1758} \end{APACrefDOI}
\PrintBackRefs{\CurrentBib}

\bibitem [\protect \citeauthoryear {%
{Pin}%
, {Raimondo}%
, {Magni}%
\BCBL {}\ \BBA {} {Parisini}%
}{%
{Pin}%
\ \protect \BOthers {.}}{%
{\protect \APACyear {2009}}%
}]{%
5129691}
\APACinsertmetastar {%
5129691}%
\begin{APACrefauthors}%
{Pin}, G.%
, {Raimondo}, D\BPBI M.%
, {Magni}, L.%
\BCBL {}\ \BBA {} {Parisini}, T.%
\end{APACrefauthors}%
\unskip\
\newblock
\APACrefYearMonthDay{2009}{}{}.
\newblock
{\BBOQ}\APACrefatitle {Robust Model Predictive Control of Nonlinear Systems
  With Bounded and State-Dependent Uncertainties} {Robust model predictive
  control of nonlinear systems with bounded and state-dependent
  uncertainties}.{\BBCQ}
\newblock
\APACjournalVolNumPages{IEEE Transactions on Automatic
  Control}{54}{7}{1681-1687}.
\PrintBackRefs{\CurrentBib}

\bibitem [\protect \citeauthoryear {%
Richards%
\ \BBA {} How%
}{%
Richards%
\ \BBA {} How%
}{%
{\protect \APACyear {2007}}%
}]{%
doi:10.1080/00207170701491070}
\APACinsertmetastar {%
doi:10.1080/00207170701491070}%
\begin{APACrefauthors}%
Richards, A.%
\BCBT {}\ \BBA {} How, J\BPBI P.%
\end{APACrefauthors}%
\unskip\
\newblock
\APACrefYearMonthDay{2007}{}{}.
\newblock
{\BBOQ}\APACrefatitle {Robust distributed model predictive control} {Robust
  distributed model predictive control}.{\BBCQ}
\newblock
\APACjournalVolNumPages{International Journal of Control}{80}{9}{1517-1531}.
\newblock
\begin{APACrefURL} \url{https://doi.org/10.1080/00207170701491070}
  \end{APACrefURL}
\newblock
\begin{APACrefDOI} \doi{10.1080/00207170701491070} \end{APACrefDOI}
\PrintBackRefs{\CurrentBib}

\bibitem [\protect \citeauthoryear {%
Shuai~Liu%
}{%
Shuai~Liu%
}{%
{\protect \APACyear {2017}}%
}]{%
ShuaiLiu2017}
\APACinsertmetastar {%
ShuaiLiu2017}%
\begin{APACrefauthors}%
Shuai~Liu, Y\BPBI S.%
\end{APACrefauthors}%
\unskip\
\newblock
\APACrefYearMonthDay{2017}{}{}.
\newblock
{\BBOQ}\APACrefatitle {Event-triggered dynamic output feedback RMPC for
  polytopic systems with redundant channels: Input-to-state stability}
  {Event-triggered dynamic output feedback rmpc for polytopic systems with
  redundant channels: Input-to-state stability}.{\BBCQ}
\newblock
\APACjournalVolNumPages{Journal of the Franklin Institute}{354}{}{2871-2892}.
\newblock
\begin{APACrefURL}
  \url{http://www.sciencedirect.com/science/article/pii/S0016003217300698}
  \end{APACrefURL}
\newblock
\begin{APACrefDOI} \doi{https://doi.org/10.1016/j.jfranklin.2017.02.008}
  \end{APACrefDOI}
\PrintBackRefs{\CurrentBib}

\bibitem [\protect \citeauthoryear {%
Sontag%
}{%
Sontag%
}{%
{\protect \APACyear {1995}}%
}]{%
Sontag199520}
\APACinsertmetastar {%
Sontag199520}%
\begin{APACrefauthors}%
Sontag.%
\end{APACrefauthors}%
\unskip\
\newblock
\APACrefYearMonthDay{1995}{}{}.
\newblock
{\BBOQ}\APACrefatitle {On characterizations of the input-to-state stability
  property} {On characterizations of the input-to-state stability
  property}.{\BBCQ}
\newblock
\APACjournalVolNumPages{Systems \& Control Letters}{24}{}{351-359}.
\newblock
\begin{APACrefURL}
  \url{http://www.sciencedirect.com/science/article/pii/0167691194000506}
  \end{APACrefURL}
\newblock
\begin{APACrefDOI} \doi{https://doi.org/10.1016/0167-6911(94)00050-6}
  \end{APACrefDOI}
\PrintBackRefs{\CurrentBib}

\bibitem [\protect \citeauthoryear {%
{Sun}%
, {Dai}%
, {Liu}%
, {Dimarogonas}%
\BCBL {}\ \BBA {} {Xia}%
}{%
{Sun}%
\ \protect \BOthers {.}}{%
{\protect \APACyear {2019}}%
}]{%
8667353}
\APACinsertmetastar {%
8667353}%
\begin{APACrefauthors}%
{Sun}, Z.%
, {Dai}, L.%
, {Liu}, K.%
, {Dimarogonas}, D\BPBI V.%
\BCBL {}\ \BBA {} {Xia}, Y.%
\end{APACrefauthors}%
\unskip\
\newblock
\APACrefYearMonthDay{2019}{}{}.
\newblock
{\BBOQ}\APACrefatitle {Robust Self-Triggered MPC With Adaptive Prediction
  Horizon for Perturbed Nonlinear Systems} {Robust self-triggered mpc with
  adaptive prediction horizon for perturbed nonlinear systems}.{\BBCQ}
\newblock
\APACjournalVolNumPages{IEEE Transactions on Automatic
  Control}{64}{11}{4780-4787}.
\PrintBackRefs{\CurrentBib}

\bibitem [\protect \citeauthoryear {%
P.~Trodden%
}{%
P.~Trodden%
}{%
{\protect \APACyear {2014}}%
}]{%
TRODDEN201498}
\APACinsertmetastar {%
TRODDEN201498}%
\begin{APACrefauthors}%
Trodden, P.%
\end{APACrefauthors}%
\unskip\
\newblock
\APACrefYearMonthDay{2014}{}{}.
\newblock
{\BBOQ}\APACrefatitle {Feasible parallel-update distributed MPC for uncertain
  linear systems sharing convex constraints} {Feasible parallel-update
  distributed mpc for uncertain linear systems sharing convex
  constraints}.{\BBCQ}
\newblock
\APACjournalVolNumPages{Systems \& Control Letters}{74}{}{98 - 107}.
\newblock
\begin{APACrefURL}
  \url{http://www.sciencedirect.com/science/article/pii/S0167691114001819}
  \end{APACrefURL}
\newblock
\begin{APACrefDOI} \doi{https://doi.org/10.1016/j.sysconle.2014.08.012}
  \end{APACrefDOI}
\PrintBackRefs{\CurrentBib}

\bibitem [\protect \citeauthoryear {%
P\BPBI A.~Trodden%
\ \BBA {} Richards%
}{%
P\BPBI A.~Trodden%
\ \BBA {} Richards%
}{%
{\protect \APACyear {{\protect \bibnodate {}}}}%
}]{%
Trodden2014}
\APACinsertmetastar {%
Trodden2014}%
\begin{APACrefauthors}%
Trodden, P\BPBI A.%
\BCBT {}\ \BBA {} Richards, A\BPBI G.%
\end{APACrefauthors}%
\unskip\
\newblock
\APACrefYearMonthDay{{\protect \bibnodate {}}}{}{}.
\newblock

\PrintBackRefs{\CurrentBib}

\bibitem [\protect \citeauthoryear {%
C.~Wang%
\ \BBA {} Ong%
}{%
C.~Wang%
\ \BBA {} Ong%
}{%
{\protect \APACyear {2010}}%
}]{%
WANG20102053}
\APACinsertmetastar {%
WANG20102053}%
\begin{APACrefauthors}%
Wang, C.%
\BCBT {}\ \BBA {} Ong, C\BHBI J.%
\end{APACrefauthors}%
\unskip\
\newblock
\APACrefYearMonthDay{2010}{}{}.
\newblock
{\BBOQ}\APACrefatitle {Distributed model predictive control of dynamically
  decoupled systems with coupled cost} {Distributed model predictive control of
  dynamically decoupled systems with coupled cost}.{\BBCQ}
\newblock
\APACjournalVolNumPages{Automatica}{46}{12}{2053 - 2058}.
\newblock
\begin{APACrefURL}
  \url{http://www.sciencedirect.com/science/article/pii/S0005109810003845}
  \end{APACrefURL}
\newblock
\begin{APACrefDOI} \doi{https://doi.org/10.1016/j.automatica.2010.09.002}
  \end{APACrefDOI}
\PrintBackRefs{\CurrentBib}

\bibitem [\protect \citeauthoryear {%
{Wang}%
, {Ong}%
\BCBL {}\ \BBA {} {Hong}%
}{%
{Wang}%
\ \protect \BOthers {.}}{%
{\protect \APACyear {2016}}%
}]{%
7799069}
\APACinsertmetastar {%
7799069}%
\begin{APACrefauthors}%
{Wang}, Z.%
, {Ong}, C.%
\BCBL {}\ \BBA {} {Hong}, G.%
\end{APACrefauthors}%
\unskip\
\newblock
\APACrefYearMonthDay{2016}{}{}.
\newblock
{\BBOQ}\APACrefatitle {Distributed Model Predictive Control of linear
  discrete-time systems with coupled constraints} {Distributed model predictive
  control of linear discrete-time systems with coupled constraints}.{\BBCQ}
\newblock
\BIn{} \APACrefbtitle {2016 IEEE 55th Conference on Decision and Control (CDC)}
  {2016 ieee 55th conference on decision and control (cdc)}\ (\BPG~5226-5231).
\PrintBackRefs{\CurrentBib}

\bibitem [\protect \citeauthoryear {%
Z.~Wang%
\ \BBA {} Ong%
}{%
Z.~Wang%
\ \BBA {} Ong%
}{%
{\protect \APACyear {2017}}%
}]{%
WANG2017184}
\APACinsertmetastar {%
WANG2017184}%
\begin{APACrefauthors}%
Wang, Z.%
\BCBT {}\ \BBA {} Ong, C\BPBI J.%
\end{APACrefauthors}%
\unskip\
\newblock
\APACrefYearMonthDay{2017}{}{}.
\newblock
{\BBOQ}\APACrefatitle {Distributed Model Predictive Control of linear
  discrete-time systems with local and global constraints} {Distributed model
  predictive control of linear discrete-time systems with local and global
  constraints}.{\BBCQ}
\newblock
\APACjournalVolNumPages{Automatica}{81}{}{184 - 195}.
\newblock
\begin{APACrefURL}
  \url{http://www.sciencedirect.com/science/article/pii/S0005109817301577}
  \end{APACrefURL}
\newblock
\begin{APACrefDOI} \doi{https://doi.org/10.1016/j.automatica.2017.03.027}
  \end{APACrefDOI}
\PrintBackRefs{\CurrentBib}

\end{thebibliography}
\section{Appendices}

\appendix
\section{the proof of lemma 3.1}
\begin{proof}
From the proof of lemma 1\citep{DAI20191446}, we know the error $e$ between nominal and real system satisfies with:
\begin{align}
&\begin{cases}
\parallel e^i(t_k+l|t_k)\parallel \le \frac{1-||A^i||^l}{1-||A^i||}\bar{w^i},||A^i|| \neq 1\notag\\
\parallel e^j(t_k+l|t_k)\parallel \le l\bar{w^j},||A^j|| = 1 \notag\\
\end{cases}
\end{align}
Then we can get
\begin{align}
&\parallel\sum_{i=1}^{M}\Psi^i_xe^i(t_k+l|t_k)\parallel \le \sum_{i=1}^{M}\parallel\Psi^i_x\parallel\parallel e^i(t_k+l|t_k)\parallel\notag\\
&\le\sum_{i=1}^{I}\parallel\Psi_x^i\parallel\bar{w^i}\frac{1-\parallel A^i \parallel^{l}}{1-\parallel A^i \parallel} + \sum_{j=1}^{J}\parallel\Psi_x^j\parallel l\bar{w^j}
=\epsilon(l)\notag\\
&\sum_{i=1}^{M}\Psi^i_xe^i(t_k+l|t_k)\le\epsilon(l)\textbf{1}_p
\end{align}
According to the condition about upper bound of disturbance, we have
\begin{align}
&\begin{cases}
\bar{w^i}\le (\frac{1}{M\parallel\Psi_N^i\parallel}-\frac{r^i}{\sqrt{\lambda_{max}(P^i)}})\frac{1-\parallel A^i\parallel}{1-\parallel A^i\parallel^N}, & \text{if} \parallel A^i\parallel\neq 1\notag\\
\bar{w^i}\le (\frac{1}{M\parallel\Psi_N^j\parallel}-\frac{r^j}{\sqrt{\lambda_{max}(P^j)}})\frac{1}{N}  &\text{if} \parallel A^j\parallel= 1
\end{cases}\notag\\
&\begin{cases}
\parallel\Psi_N^i\parallel(\bar{w^i}\frac{1-\parallel A^i\parallel^N}{1-\parallel A^i\parallel}+\frac{r^i}{\sqrt{\lambda_{max}(P^i)}})\le \frac{1}{M}, & \text{if} \parallel A^i\parallel\neq 1\notag\\
\parallel\Psi_N^j\parallel(N\bar{w^j}+\frac{r^j}{\sqrt{\lambda_{max}(P^j)}})\le \frac{1}{M}, & \text{if} \parallel A^j\parallel= 1\notag\\
\end{cases}\notag\\
&\sum_{i=1}^{I}(\parallel\Psi_N^i\parallel\bar{w^i}\frac{1-\parallel A^i\parallel^N}{1-\parallel A^i\parallel}+\parallel\Psi_N^i\parallel\frac{r^i}{\sqrt{\lambda_{max}(P^i)}})+\notag\\
&\sum_{j=1}^{J}(\parallel\Psi_N^j\parallel N\bar{w^j}+\parallel\Psi_N^j\parallel\frac{r^j}{\sqrt{\lambda_{max}(P^j)}}) \le 1\notag\\
&\sum_{i=1}^{M}\parallel\Psi_N^i\parallel\frac{r^i}{\sqrt{\lambda_{max}(P^i)}}+(\sum_{i=1}^{I}\parallel\Psi_N^i\parallel\bar{w^i}\frac{1-\parallel A^i\parallel^N}{1-\parallel A^i\parallel}+\sum_{j=1}^{J}\parallel\Psi_N^j\parallel N\bar{w^j}) \le 1\notag\\
&\sum_{i=1}^{M}\parallel\Psi_N^i\parallel\frac{r^i}{\sqrt{\lambda_{max}(P^i)}}+\epsilon(N)
\le 1\\
&\parallel\sum_{i=1}^{M}\Psi^i_Ne^i(t_k+N|t_k)\parallel \le \sum_{i=1}^{M}\parallel\Psi^i_N\parallel \parallel e^i(t_k+N|t_k)\parallel = \epsilon(N) \notag\\
&\sum_{i=1}^{M}\Psi^i_Ne^i(t_k+N|t_k)\le\epsilon(N)\textbf{1}_p
\end{align}
When $l=0$, apparently we have 
\begin{align}
\sum_{i=1}^{M}(\Psi^i_xx^i(t_k|t_k)+\Psi^i_uu^i(t_k|t_k))=\sum_{i=1}^{M}(\Psi^i_xz^i(t_k|t_k)+\Psi^i_uu^i(t_k|t_k))\le \textbf{1}_p\notag
\end{align}
For $l \in Z_{[1,N-1]}$,
\begin{align}
&\sum_{i=1}^{M}(\Psi^i_xx^i(t_k+l|t_k)+\Psi^i_uu^i(t_k+l|t_k)) \notag\\
&= \sum_{i=1}^{M}(\Psi^i_x(z^i(t_k+l|t_k)+e^i(t_k+l|t_k))+\Psi^i_uu^i(t_k+l|t_k))\notag\\ 
&\leq (1- \epsilon(l))\textbf{1}_p + \sum_{i=1}^{M}\Psi^i_xe^i(t_k+l|t_k) \le \textbf{1}_p\notag
\end{align}
For $l = N $, from (A2) and (A3), we have
\begin{align}
&\parallel\sum_{i=1}^{M}\Psi^i_Nz^i(t_k+N|t_k)\parallel
\le\sum_{i=1}^{M}\parallel\Psi^i_N\parallel\parallel z^i(t_k+N|t_k\parallel\notag\\
&\le \sum_{i=1}^{M}\parallel\Psi^i_N\parallel\frac{\varepsilon^i}{\lambda_{max}(P^i)}\le \sum_{i=1}^{M}\parallel\Psi^i_N\parallel\frac{r^i}{\lambda_{max}(P^i)}\le  1- \epsilon(N)\notag\\
&\sum_{i=1}^{M}\Psi^i_Nx^i(t_k+N|t_k)=\sum_{i=1}^{M}\Psi^i_N(z^i(t_k+N|t_k)+e^i(t_k+N|t_k))\notag\\ 
&\le (1- \epsilon(N))\textbf{1}_p + \sum_{i=1}^{M}\Psi^i_Ne^i(t_k+N|t_k) \le \textbf{1}_p\notag
\end{align}
\end{proof}
\section{the proof of lemma 4.1}
\begin{proof}
Consider a candidate solution $\hat{\bm{u}}(t_{k+1})$ at time $	t_{k+1} $ given by
\begin{align}
\begin{cases}
\hat{u}^i(t_{k+1} + l|t_{k+1}) = u^i(t_{k+1} + l |t_k), &l\in\mathbb{N}_{[0, N-M^i_k-1]}\\
\hat{u}^i(t_{k+1} + l|t_{k+1}) = K^i\hat{z}^i(t_{k+1} + l|t_{k+1}), &l\in\mathbb{N}_{[N-M^i_k, N-1]}\notag
\end{cases}
\end{align}
where $\hat{z}^i(t_{k+1})$ is the prediction state of nominal system with $\hat{\bm{u}}(t_{k+1})$ and $x(t_{k+1})$.\\
Next we will prove $\hat{\bm{u}}(t_{k+1})$ is a feasible solution to $\mathbb{P}_\epsilon(x(t_{k+1}))$, i.e. it satisfies with (6b-6f) and (15).
Constraints (6b) and (6d) are obviouly satisfied in $\mathbb{P}_\epsilon(x(t_{k+1}))$.\\
For $l\in\mathbb{N}_{[0,N-M^i_k-1]}$, by the definition of $\hat{\bm{u}}(t_{k+1})$, (6f) can trivally obtained. From lemma 1 we get, 
\begin{align}
&\hat{z}^i(t_{k+1}+l|t_{k+1}) = z^i(t_{k+1}+l|t_k) + {A^i}^le^i(t_k+M_k^i|t_k) \notag\\
&\hat{z}^i(t_{k+1}+l|t_{k+1}) = z^i(t_k+M_k^i+l|t_k) + {A^i}^le^i(t_k+M_k^i|t_k) \notag
\end{align}
where we know from $(4)$ and $(5)$ that $z^i(t_k+M_k^i+l|t_k) \in \mathcal{Z}_{M_k^i+l}^i$, $e^i(t_k+M_k^i|t_k) \in \mathcal{R}_{M_k^i}^i$. It follows that
\begin{align}
\hat{z}^i(t_{k+1}+l|t_{k+1}) &\in \mathcal{Z}_{M_k^i+l}^i \oplus {A^i}^l\mathcal{R}_{M_k^i}^i \notag\\
&= \mathcal{X}^i \ominus \mathcal{R}_{M_k^i+l}^i \oplus {A^i}^l\mathcal{R}_{M_k^i}^i \notag\\
&= \mathcal{X}^i \ominus \mathcal{R}_l^i \triangleq \mathcal{Z}_l^i
\end{align}
which proves the satisfaction of (6e).
\begin{align}
&\parallel\sum_{i=1}^{M}\Psi^i_x{A^i}^le^i(t_k+M_k^i|t_k)\parallel \le \sum_{i=1}^{M}\parallel\Psi^i_x\parallel\parallel{A^i}\parallel^l\parallel e^i(t_k+M_k^i|t_k)\parallel\notag\\
&\le\sum_{i=1}^{I}\parallel\Psi_x^i\parallel\bar{w^i}\frac{\parallel A^i \parallel^{l}-\parallel A^i \parallel^{l+M_k^i}}{1-\parallel A^i \parallel} + \sum_{j=1}^{J}\parallel\Psi_x^j\parallel M_k^i\bar{w^j}\notag\\
&\le\epsilon(M_k^i+l)-\epsilon(l)\notag\\
&\sum_{i=1}^{M}(\Psi^i_x\hat{z}^i(t_{k+1}+l|t_{k+1})+\Psi^i_u\hat{u}^i(t_{k+1}+l|t_{k+1}))\notag\\
&=\sum_{i=1}^{M}(\Psi^i_xz^i(t_{k+1}+l|t_k) + \Psi^i_uu^i(t_{k+1}+l|t_k)) + \sum_{i=1}^{M}\Psi^i_x{A^i}^le^i(t_k+M_k^i|t_k)  \notag\\
&\le(1- \epsilon(M_k^i+l))\textbf{1}_p + \sum_{i=1}^{M}\Psi^i_x{A^i}^le^i(t_k+M_k^i|t_k)\notag\\
&\text{So we have}\notag\\
&\sum_{i=1}^{M}(\Psi^i_x\hat{z}^i(t_{k+1}+l|t_{k+1})+\Psi^i_u\hat{u}^i(t_{k+1}+l|t_{k+1}))\notag\\
&\le(1- \epsilon(M_k^i+l))\textbf{1}_p + (\epsilon(M_k^i+l)-\epsilon(l))\textbf{1}_p
\le\epsilon(l)\textbf{1}_p\notag
\end{align}
it prove the global constraints (15a).\\\\
For $l\in\mathbb{N}_{[N-M^i_k, N-1]}$, 
\begin{align}
\hat{z}^i(t_k+N|t_{k+1}) = z^i(t_k+N|t_k) + {A^i}^{N-M_k^i}e^i(t_k+M_k^i|t_k) \notag
\end{align}
Then we hold from triangle inequality that 
\begin{align}
\parallel\hat{z}^i(t_k+N|t_{k+1})\parallel_{P^i}
&\le \parallel z^i(t_k+N|t_k)\parallel_{P^i}
+ \parallel{A^i}^{N-M_k^i}e^i(t_k+M_k^i|t_k)\parallel_{P^i}\notag\\
&\le \parallel z^i(t_k+N|t_k)\parallel_{P^i}
+ \parallel e^i(t_k+N|t_k)\parallel_{P^i}\notag\\
&\begin{cases}
\le \parallel z^i(t_k+N|t_k)\parallel_{P^i} + \bar{w}^i\sqrt{\lambda_{max}(P^i)}\frac{1-\parallel A^i \parallel^N}{1-\parallel A^i \parallel} , \quad ||A^i||\neq 1\\
\le \parallel z^i(t_k+N|t_k)\parallel_{P^i} + \bar{w^i}\sqrt{\lambda_{max}(P^i)}N, \quad ||A^i|| =  1\\
\end{cases}\notag
\end{align}
\indent Considering the local condition  $\bar{w^i}\le \frac{(r^i - \varepsilon^i)(1-||A^i||)}{\sqrt{\lambda_{max}(P^i)}(1-||A^i||)^N} $ ($\bar{w^i}\le \frac{(r^i - \varepsilon^i)}{\sqrt{\lambda_{max}(P^i)}N}, when ||A^i|| = 1$) and $z^i(t_k+N|t_k) \in \mathcal{X}_\varepsilon^i$, we have $\hat{z}^i(t_k+N|t_{k+1}) \in \mathcal{X}_r^i$.\\ 
And the set $\mathcal{X}^i_r$ is maximal Robust Control Invariant Set(RCIS) under local state feedback control law : $u^i(z^i) = K^iz^i$, so we have 
\begin{align}
\hat{z}^i(t_{k+1}+l|t_{k+1}) &= (A^i+ B^iK^i)\hat{z}^i(t_{k+1}+l-1|t_{k+1})\notag\\
&\in \mathcal{X}_r^i \subseteq \mathcal{Z}_N^i \subseteq \mathcal{Z}_l^i,
\end{align}
which proves the satisfaction of (6e), from (7), we know $K^i\hat{z}^i \in \mathcal{U}^i$,which satisfies (6f)\\
from lemma 3, when $\hat{z}^i(t_{k+1}+l|t_{k+1}) \in\mathcal{X}_r^i$, we have
\begin{align}
\sum_{i=1}^{M}&(\Psi^i_x\hat{z}^i(t_{k+1}+l|t_{k+1})+\Psi^i_u\hat{u}^i(t_{k+1}+l|t_{k+1}))\notag\\
&= \sum_{i=1}^{M}(\Psi^i_x\hat{z}^i(t_{k+1}+l|t_{k+1})+\Psi^i_uK^i\hat{z}^i(t_{k+1}+l|t_{k+1}))\notag\\
&= \sum_{i=1}^{M}\Psi^i_N\hat{z}^i(t_{k+1}+l|t_{k+1})\notag\\
&\le (1- \epsilon(N)) \textbf{1}_p  \le(1- \epsilon(l)\textbf{1}_p \notag
\end{align}
Next we prove the satisfaction of (6c). With a proper choose of $P^i$ which obtained by (9), we have 
\begin{align}
&\parallel\hat{z}^i(t_{k+1}+N|t_{k+1})\parallel_{P^i}^2 - \parallel\hat{z}^i(t_{k+1}+N-1|t_{k+1})\parallel_{P^i}^2\notag\\
&= -\parallel\hat{z}^i(t_{k+1}+N-1|t_{k+1})\parallel_{Q^i + {K^i}^TR^iK^i}^2\notag\\
&\le -\parallel\hat{z}^i(t_{k+1}+N-1|t_{k+1})\parallel_{Q^i}^2\notag\\
&\le -\lambda_{min}(Q^i)\parallel\hat{z}^i(t_{k+1}+N-1|t_{k+1})\parallel^2\notag\\
&\le-\frac{\lambda_{min}(Q^i)}{\lambda_{max}(P^i)}\parallel\hat{z}^i(t_{k+1}+N-1|t_{k+1})\parallel_{P^i}^2\notag
\end{align}
from above inequality and (17), we have
\begin{align}
\parallel\hat{z}^i(t_{k+1}+N|t_{k+1})\parallel_{P^i}^2 &\le(1-\frac{\lambda_{min}(Q^i)}{\lambda_{max}(P^i)})\parallel\hat{z}^i(t_{k+1}+N-1|t_{k+1})\parallel_{P^i}^2\notag\\
&\le(1-\frac{\lambda_{min}(Q^i)}{\lambda_{max}(P^i)}){r^i}^2 \le {\varepsilon^i}^2
\end{align}
So we have $\hat{z}^i(t_{k+1}+N|t_{k+1}) \in \mathcal{X}_\varepsilon^i$ which satisfies with (6c), from lemma 3 we get the following inequality \\
\begin{align}
&\sum_{i=1}^{M}\Psi^i_N\hat{z}^i(t_{k+1}+N|t_{k+1})	\le(1- \epsilon(N))\textbf{1}_p\notag
\end{align}
Hence we complete the proof of $(6b)-(6f)$ and $(15)$ in $\mathbb{P}_\epsilon(x(t_{k+1}))$.
\end{proof}
\end{document}